\title{Process-Algebraic Models of Multi-Writer Multi-Reader Non-Atomic Registers}
\titlerunning{Process-Algebraic Models of MWMR Non-Atomic Registers}
\author{Myrthe Spronck}{Eindhoven University of Technology, The Netherlands}{m.s.c.spronck@student.tue.nl}{0000-0003-2909-7515}{}
\author{Bas Luttik}{Eindhoven University of Technology, The Netherlands}{s.p.luttik@tue.nl}{0000-0001-6710-8436}{}
\authorrunning{M.S.C. Spronck and B. Luttik} 
\keywords{mutual exclusion, model checking, non-atomic reads and writes, regular register.} 
\tikzset{
        state/.style={
          circle,
         draw,
          minimum size=6mm,
    },
}
\lstdefinelanguage{mCRL2}
{
 keywords={act,var,cons,end,eqn,glob,init,val,whr,sort,map,pbes,proc,struct},
 keywords=[2]{true,false,delta,tau},
 keywords=[3]{Bool,Nat,Real,Pos,Int,Set,Bag,List,Int2Nat,Pos2Nat,Int2Pos,min,max
},
 keywords=[4]{hide,if,rename,sum,in,mu,nu,forall,exists,mod,allow,block,comm},
 keywords=[5]{nested,initial,state},
 numberstyle=\tiny\bfseries, 
 comment=[l]\%,
 commentstyle=\slshape,
 keywordstyle=[1]\bfseries,
 keywordstyle=[2]\itshape, 
 keywordstyle=[3]\itshape,
 keywordstyle=[4]\itshape,
 keywordstyle=[5]\bfseries\itshape,
 basicstyle=\ttfamily\scriptsize,
 flexiblecolumns=false
}
[keywords,comments]
\lstdefinelanguage{mCRL2-inline}
{
 keywords={act,var,cons,end,eqn,glob,init,val,whr,sort,map,pbes,proc,struct},
 keywords=[2]{true,false,delta,tau},
 keywords=[3]{Bool,Nat,Real,Pos,Int,Set,Bag,List,Int2Nat,Pos2Nat,Int2Pos,min,max
},
 keywords=[4]{hide,if,rename,sum,in,mu,nu,forall,exists,mod,allow,block,comm},
 keywords=[5]{nested,initial,state},
 numberstyle=\scriptsize\bfseries, 
 comment=[l]\%,
 commentstyle=\slshape,
 keywordstyle=[1]\bfseries,
 keywordstyle=[2]\itshape, 
 keywordstyle=[3]\itshape,
 keywordstyle=[4]\itshape,
 keywordstyle=[5]\bfseries\itshape,
 basicstyle=\ttfamily\small,
 flexiblecolumns=false
}
[keywords,comments]
\newcommand*{\defeq}{\stackrel{\text{def}}{\equiv}}
\newcommand{\lFor}[2]{
    \State\algorithmicfor\ {#1}\ \algorithmicdo\ {#2}%
  }
\newcommand{\lWhile}[2]{
    \State\algorithmicwhile\ {#1}\ \algorithmicdo\ {#2}%
  }
\newcommand{\co}{\ensuremath{\cdot}}
\newcommand{\then}{\ensuremath{\rightarrow}}
\newcommand{\tab}{\ensuremath{~~~~ ~~~~}}
\lstdefinelanguage{procTheory}
{
 numberstyle=\scriptsize\bfseries,
 basicstyle=\rmfamily\normalsize,
 escapeinside={@@},
 frame=single
}
\newcommand{\TID}{\ensuremath{\mathbb{T}}}
\newcommand{\Data}{\ensuremath{\mathbb{D}}}
\newcommand{\startread}[1][i]{\ensuremath{\mathit{sr_{#1}}}}
\newcommand{\finishread}[2][i]{\ensuremath{\mathit{fr_{#1}(#2)}}}
\newcommand{\rd}[2][i,j]{\ensuremath{\mathit{r_{#1}(#2)}}}
\newcommand{\startwrite}[2][i]{\ensuremath{\mathit{sw_{#1}(#2)}}}
\newcommand{\finishwrite}[1][i]{\ensuremath{\mathit{fw_{#1}}}}
\newcommand{\orderwrite}[1][i]{\ensuremath{\mathit{ow_{#1}}}}
\newcommand{\executeread}[1][i]{\ensuremath{\mathit{er_{#1}}}}
\newcommand{\executewrite}[1][i]{\ensuremath{\mathit{ew_{#1}}}}
\newcommand{\we}[2][i,j]{\ensuremath{\mathit{w_{#1}(#2)}}}
\newcommand{\ops}[1]{\ensuremath{\mathit{ops}(#1)}}
\newcommand{\reads}[1]{\ensuremath{\mathit{reads}(#1)}}
\newcommand{\readsno}[1]{\ensuremath{\mathit{reads}_\mathit{no}(#1)}}
\newcommand{\writes}[1]{\ensuremath{\mathit{writes}(#1)}}
\newcommand{\relwrites}[1]{\ensuremath{\textit{rel-writes}(#1)}}
\newcommand{\fixedwrites}[1]{\ensuremath{\textit{fix-writes}(#1)}}
\newcommand{\proj}[2]{\ensuremath{{#1}|{#2}}}
\newcommand{\sersym}[1][]{\ensuremath{\mathcal{S}_{#1}}}
\newcommand{\ser}[1][]{\ensuremath{\mathrel{\sersym[#1]}}}
\newcommand{\idlesym}{\ensuremath{\mathit{idle}}}
\newcommand{\idle}[1]{\ensuremath{\idlesym(#1)}}
\newcommand{\overlapsym}[1][i]{\ensuremath{\mathit{overlap}_{#1}}}
\newcommand{\overlap}[2][i]{\ensuremath{\overlapsym[#1](#2)}}
\newcommand{\posvalsym}[1][i]{\ensuremath{\mathit{pval}_{#1}}}
\newcommand{\posval}[2][i]{\ensuremath{\posvalsym[#1](#2)}}
\newcommand{\readerssym}{\ensuremath{\mathit{rdrs}}}
\newcommand{\readers}[1]{\ensuremath{\readerssym(#1)}}
\newcommand{\writerssym}{\ensuremath{\mathit{wrtrs}}}
\newcommand{\writers}[1]{\ensuremath{\writerssym(#1)}}
\newcommand{\pendingsym}{\ensuremath{\mathit{pndng}}}
\newcommand{\pending}[1]{\ensuremath{\pendingsym(#1)}}
\newcommand{\usrsym}[1][i]{\ensuremath{\mathit{usr}_{#1}}}
\newcommand{\usr}[2][i]{\ensuremath{\usrsym[#1](#2)}}
\newcommand{\ufrsym}[1][i]{\ensuremath{\mathit{ufr}_{#1}}}
\newcommand{\ufr}[2][i]{\ensuremath{\ufrsym[#1](#2)}}
\newcommand{\uswsym}[1][i]{\ensuremath{\mathit{usw}_{#1}}}
\newcommand{\usw}[2][i]{\ensuremath{\uswsym[#1](#2)}}
\newcommand{\ufwsym}[1][i]{\ensuremath{\mathit{ufw}_{#1}}}
\newcommand{\ufw}[2][i]{\ensuremath{\ufwsym[#1](#2)}}
\newcommand{\uowsym}[1][i]{\ensuremath{\mathit{uow}_{#1}}}
\newcommand{\uow}[2][i]{\ensuremath{\uowsym[#1](#2)}}
\newcommand{\uewsym}[1][i]{\ensuremath{\mathit{uew}_{#1}}}
\newcommand{\uew}[2][i]{\ensuremath{\uewsym[#1](#2)}}
\newcommand{\uersym}[1][i]{\ensuremath{\mathit{uer}_{#1}}}
\newcommand{\uer}[2][i]{\ensuremath{\uersym[#1](#2)}}
\newcommand{\nextvalsym}{\ensuremath{\mathit{next}}}
\newcommand{\nextval}[1]{\ensuremath{\nextvalsym(#1)}}
\newcommand{\wval}[2][i]{\ensuremath{\mathit{wval}_{#1}(#2)}}
\newcommand{\valssym}[1][i]{\ensuremath{\mathit{vals}_{#1}}}
\newcommand{\vals}[2][i]{\ensuremath{\valssym[#1](#2)}}
\newcommand{\Status}[1][m]{\ensuremath{\mathbb{S}_{#1}}}
\newcommand{\SStatus}{\ensuremath{\Status[s]}}
\newcommand{\RStatus}{\ensuremath{\Status[r]}}
\newcommand{\AStatus}{\ensuremath{\Status[a]}}
\newcommand{\Reg}[1][m]{R_{#1}}
\newcommand{\SReg}{\ensuremath{\Reg[s]}}
\newcommand{\RReg}{\ensuremath{\Reg[r]}}
\newcommand{\AReg}{\ensuremath{\Reg[a]}}
\newcommand{\undefsymb}{\ensuremath{\bot}}
\newcommand{\Traces}[1]{\ensuremath{\mathcal{T}_{#1}}}
\newcommand{\STraces}{\ensuremath{\Traces{s}}}
\newcommand{\RTraces}{\ensuremath{\Traces{r}}}
\newcommand{\ATraces}{\ensuremath{\Traces{a}}}
\newcommand{\step}[1]{\ensuremath{\overset{#1}{\longrightarrow}}}
\begin{document}
\lstset{
    language=mCRL2-inline,
    numbers=left}
\providecommand*{\lstnumberautorefname}{line} 

\renewcommand{\sectionautorefname}{Section}
\renewcommand{\subsectionautorefname}{Section}
\renewcommand{\subsubsectionautorefname}{Section}

\maketitle

\begin{abstract}
  We present process-algebraic models of multi-writer multi-reader safe, regular and atomic registers. We establish the relationship between our models and alternative versions presented in the literature. We use our models to formally analyse by model checking to what extent several well-known mutual exclusion algorithms are robust for relaxed atomicity requirements. Our analyses refute correctness claims made about some of these algorithms in the literature.
\end{abstract}

\section{Introduction}

  The mutual exclusion problem was first outlined by Dijkstra \cite{dijkstra65}. Given $n$ threads executing some code with a special section called the
  ``critical section'', the problem is to ensure that at any one time
  at most one of the threads is executing its critical section.
  Dijkstra explicitly states that
  communication between threads should be done through shared registers,
  and that reading from and writing to these registers should be 
  considered atomic operations; when two threads simultaneously interact
  with the register, be it through reading or writing, the register
  behaves as though these operations took place in some total order.
  
  Lamport argued that solutions to the mutual exclusion problem that assume atomicity of register operations do not fundamentally solve it \cite{Lamport86Mutex1}. After all, implementing atomic operations would require some form of mutual exclusion at a lower level.
  Many algorithms have been proposed that solve the mutual exclusion problem without requiring atomicity of register operations,
  most famously Lamport's own Bakery algorithm \cite{Lamport74}.
  
  Analysing distributed algorithms using non-atomic registers for communication between threads can be difficult, and correctness proofs are error-prone.
  Due to the vast number of execution paths of distributed algorithms, especially when
  overlapping register operations need to be taken into account, manual correctness proofs are likely to miss issues. One better uses computer tools (e.g., model checkers or theorem provers) to support correctness claims with a detailed and preferably exhaustive analysis. This introduces the need for formal models of non-atomic registers.
  
  Lamport proposed a general mathematical formalism for reasoning about the behaviour of concurrent systems that do not rely on the atomicity of operations, which he then uses to analyse the correctness of four solutions to the mutual exclusion problem not relying on atomicity \cite{Lamport86Mutex1, Lamport86Mutex2}. In \cite{Lamport86IPCbasic}, he studies in more detail the notion of single-writer multi-reader (SWMR) non-atomic register to implement communication between concurrent threads of computation; there, he distinguishes two variants, which he refers to as \emph{safe} and \emph{regular}. When a read operation to a SWMR safe register does not overlap with any write operations, then it will return the value stored in the register, but when it does overlap with a write operation then it may return a completely arbitrary value in the domain of the register. A SWMR regular register is a bit less erratic in the sense that a read operation overlapping with write operations will at least return any of the values actually being written.
  Raynal presented a straightforward generalisation of the notion of SWMR safe register to the multi-writer case \cite{Raynal13}. How the notion of SWMR regular register should be generalised to the multi-writer case, however, is less obvious. Shao et al.\ discuss four possibilities \cite{Shao11}. 

  The formalisms in \cite{Lamport86IPCbasic,Raynal13,Shao11} for studying the behaviour of non-atomic registers are not directly amenable for analysing the correctness of distributed algorithms by explicit-state model checking, e.g., using the mCRL2 toolset \cite{mCRL2toolset}. In fact, it is not clear whether the four variants of MWMR regular registers presented in \cite{Shao11} will lead to a finite-state model even if the number of readers and writers and the set of data values of the register are finite. 
  In \cite{lamportHyperbook}, Lamport demonstrates a method of modelling SWMR safe registers through repeatedly writing arbitrary values before settling on the desired value, but this approach does not generalise to multi-writer registers.
  The main contribution of this paper is to present process-algebraic models of multi-writer multi-reader safe, regular and also atomic registers that can be directly used in mCRL2 to analyse the correctness of distributed algorithms.

  We have used our process-algebraic models to analyse to what extent various mutual exclusion algorithms are robust for relaxed non-atomicity requirements. We find that Peterson's algorithm \cite{Peterson81} no longer guarantees mutual exclusion if the atomicity requirement is relaxed for the turn register. A variant of Peterson's algorithm presented in \cite{AttiyaWelch04} does guarantee mutual exclusion even if registers are only safe. The variant presented in \cite{Shao11}, however, does not guarantee mutual exclusion with regular registers, despite a claim that it does. We also find that some of the algorithms proposed in \cite{Szy88,Szy90} do not guarantee mutual exclusion for regular registers, which seems to contradict claims that they are immune to the problem of flickering bits during writes. When analysing Lamport's 3-bit algorithm \cite{Lamport86Mutex2} we discovered that its mutual exclusion guarantee crucially depends on how one of the more complex statements of the algorithm is implemented. Finally, we confirm that Aravind's BLRU algorithm \cite{aravind2010yet}, Dekker's algorithm \cite{alagarsamy2003some}, Dijkstra's algorithm \cite{dijkstra65} and Knuth's algorithm \cite{knuth1966additional} guarantee mutual exclusion even with safe registers.

  This paper is organised as follows. In \autoref{sec:registerModels} we present some basic definitions pertaining to SWMR registers, including formalisations of Lamport's notions of SWMR safe, regular and atomic registers. In \autoref{sec:MWMRdefs} we present and discuss our process-algebraic definitions of MWMR safe, regular and atomic registers, and establish formal relationships with their SWMR counterparts. In \autoref{sec:compare_shao} we compare our notion of MWMR regular register with the variants of MWMR regular registers proposed by \cite{Shao11}. In \autoref{sec:verifmutex} we report on our analyses of the various mutual exclusion algorithms. Finally, we present conclusions and some ideas for future work in \autoref{sec:conclusions}.

\section{Single-writer multi-reader registers}\label{sec:registerModels}

  The definitions presented in this section are adapted from \cite{Shao11} and \cite{Lamport86IPCalg}.

  We consider $n$ threads operating on a register with values in a finite set $\Data{}$ of register values; the initial value of the register will be denoted by $d_{\mathit{init}}$. Threads are identified by a natural number in the set $\TID=\{0,\dots,n-1\}$. A \emph{read operation} by thread $i\in\TID$ on the register, with \emph{return value} $d\in\Data$, is a sequence $\rd[i]{d}=\startread[i]\finishread[i]{d}$ consisting of an \emph{invocation} $\startread[i]$ (for ``thread $i$ starts to read''), and a matching \emph{response} $\finishread[i]{d}$ (for: ``the read by thread $i$ finishes with return value $d$''). A \emph{write} operation of thread $i$ on the register, with \emph{write value} $d$, is a sequence $\we[i]{d}=\startwrite[i]{d}\finishwrite[i]$ consisting of an \emph{invocation} $\startwrite[i]{d}$ (for: ``thread $i$ starts to write value $d$'') and a matching \emph{response} $\finishwrite[i]$ (for: ``the write by thread $i$ finishes''). An \emph{operation} of thread $i$ is either a read operation or a write operation of that thread.

  For every $i\in\TID$, let
    $A_i=\{\startread[i],\finishread[i]{d},\startwrite[i]{v},\finishwrite[i]\mid d\in\Data{}\}$,
  and let $A=\bigcup_{i\in\TID} A_i$. If $\sigma$ is a sequence of elements of $A$, then we denote by $\proj{\sigma}{i}$ the subsequence of $\sigma$ consisting of the elements in $A_i$.
  A \emph{schedule} on a register is a finite or infinite sequence $\sigma$ of elements of $A$ such that $\proj{\sigma}{i}$ consists of alternating invocations and matching responses, beginning with an invocation, and if $\proj{\sigma}{i}$ is finite, ending with a response. Note that, by these requirements and our definition of the notion of operation, $\proj{\sigma}{i}$ can then be obtained as the concatenation of read and write operations $o_0o_1o_2\dots$ executed by thread $i$.\footnote{The same operation may occur multiple times in $\proj{\sigma}{i}$. Henceforth, when we consider an operation in $\proj{\sigma}{i}$ we actually mean to refer to a specific occurrence in $\proj{\sigma}{i}$ of the operation. To disambiguate between two different occurrences of the same operation $o$ we could, e.g., annotate each occurrence of $o$ with its position in $\proj{\sigma}{i}$. We will not do so explicitly, because it will unnecessarily clutter the presentation. But the reader should keep in mind that, whenever we refer to an operation in a schedule $\sigma$ we actually mean to refer to a particular occurrence of that operation in $\proj{\sigma}{i}$.}  
  We shall denote by $\ops{\sigma,i}$ the set of all operations executed by thread $i$ (i.e., $\ops{\sigma,i}=\{o_0,o_1,o_2,\dots\}$) and by $\ops{\sigma}$ the set of all operations executed by any of the threads. It is technically convenient to include in $\ops{\sigma}$ a special write operation $w_\mathit{init}$that writes the initial value of the register. Then $\ops{\sigma}=\{w_{\mathit{init}}\}\cup\bigcup_{i\in\TID}\ops{\sigma,i}$. We also use $\reads{\sigma}$ and $\writes{\sigma}$ for the subsets of $\ops{\sigma}$ respectively consisting of the read operations and the write operations only.

  A schedule $\sigma$ induces a partial order on $\ops{\sigma}$: if $o,o'\in\ops{\sigma}$, then we write $o <_\sigma o'$ if, and only if, the response of $o$ precedes the invocation of $o'$ in $\sigma$. We stipulate that $w_{\mathit{init}}<o$ for all $o\in\ops{\sigma}\backslash\{w_{\mathit{init}}\}$.
  Let $r\in\ops{\sigma}$ be a read operation and let $w\in\ops{\sigma}$ be a write operation.
  We say that $w$ is \emph{fixed} for $r$ if $w <_{\sigma} r$; $\fixedwrites{\sigma,r}$ denotes the set of all writes that are fixed for $r$.
  We say that $w$ is \emph{relevant} for $r$ if $r\not<_{\sigma}w$; $\relwrites{\sigma,r}$ denotes the set of all writes in $\ops{\sigma}$ that are relevant for $r$. Note that, by the inclusion of $w_{\mathit{init}}$, the sets $\relwrites{\sigma,r}$ and $\fixedwrites{\sigma,r}$ are non-empty for all $r\in\reads{\sigma}$.
  We say that $r\in\reads{\sigma}$ \emph{can read from} $w\in\writes{\sigma}$ if $w$ is relevant for $r$ and there does not exist $w'\in\writes{\sigma}$ such that $w<_{\sigma}w'<_{\sigma}r$.
  An operation $o$ has \emph{overlapping writes} if there exists $w\in\writes{\sigma}$ such that $o\not<_{\sigma}w$ and $w\not<_{\sigma}o$.

  In \cite{Shao11}, a register model is defined as a set of schedules satisfying certain conditions. Restricting attention to single-writer multi-reader (SWMR) registers only, Lamport considers three register models: safe, regular and atomic \cite{Lamport86IPCalg}. We proceed to define Lamport's models by formulating conditions on \emph{single-writer} schedules, i.e., schedules in which all write operations are by one particular thread. If $\sigma$ is a single-writer schedule, then, since a write cannot have overlapping writes, every non-empty finite set $W$ of writes has a $<_{\sigma}$-maximum, i.e., an element $w\in W$ such that $w'<_{\sigma}w$ for all $w'\in W\setminus\{w'\}$. Since writes that are fixed for $r$ have their responses in the finite prefix of $\sigma$ preceding the invocation of $r$, we have that $\fixedwrites{\sigma,r}$ is finite for every $r$. Since $\fixedwrites{\sigma,r}$ is non-empty, it always has a $<_{\sigma}$-maximum.
  
  A SWMR register is \emph{safe} if a read that does not have overlapping writes returns the most recently written value. A read that does have overlapping writes may return any arbitrary value in the domain $\Data$ of the register.
  
  \begin{definition}\label{def:SWMR-safe}
      A single-writer schedule $\sigma$ is \emph{safe} if every read $r$ without overlapping writes returns the value written by the $<_{\sigma}$-maximum of the set of $\fixedwrites{\sigma,r}$.
  \end{definition}

  A SWMR register is \emph{regular} if it is safe, and a read that has overlapping writes returns the value of one of the overlapping writes or the most recently written value.

\begin{definition}\label{def:SWMR-regular}
    A single-writer schedule $\sigma$ is \emph{regular} if every read $r$ returns either the value written by the $<_{\sigma}$-maximum of the set $\fixedwrites{\sigma, r}$ or the value of an overlapping write.
\end{definition}

  A SWMR register is \emph{atomic} if all reads and writes behave as though they occur in some definite order. A \emph{serialisation} is a total order $\sersym{}$ on a subset $O$ of $\ops{\sigma}$ that is \emph{consistent} with $<_{\sigma}$ in the sense that for all $o,o'\in O$ we have that $o<_{\sigma}o'$ implies $o\ser{}o'$.
  A serialisation $(O,\sersym{})$ is \emph{legal} if every read operation returns the value of the most recent write operation according to $\sersym{}$, that is, whenever $r\in O$ is a read operation with return value $v$, then $v$ is the write value of $\sersym{}$-maximum of $\relwrites{\sigma,r}$.

\begin{definition}\label{def:SWMR-atomic}
    A single-writer schedule $\sigma$ is \emph{atomic} if $\ops{\sigma}$ has a legal serialisation.
\end{definition}

 \section{Multi-writer multi-reader registers}\label{sec:MWMRdefs}

 We now want to define multi-write multi-reader (MWMR) safe, regular and atomic registers.
 Since our goal is to verify the correctness of mutual exclusion algorithms by model checking, we prefer operational, process-algebraic definitions of register models over definitions in terms of schedules. We are going to define register models by giving recursive process definitions that, given the state of the register, admit certain interactions with the register, resulting in an update of the state of the register. 
  Which information needs to be maintained in the state of the register depends on the register model, but the state of register should at least reflect which operations are currently active. So, with each register model $m \in \{s,r,a\}$ we associate a set of states $\Status$, and we assume that the following functions are defined on $\Status$:
  \begin{equation} \label{eq:standardmappings}
    \begin{array}{l}
    \readerssym, \writerssym,\idlesym: \Status{} \rightarrow \mathcal{P}(\TID)\\
    \usrsym,\ufrsym,\ufwsym: \Status{}\rightarrow\Status{}\\
    \uswsym: \Data\times\Status{} \rightarrow \Status{}\enskip.
    \end{array}
  \end{equation}
  The mappings $\readerssym$ returns the set of all threads that are currently reading, i.e., $i\in\readers{s}$ if, and only if, thread $i$ has invoked a read operation but the matching response has not yet occurred. Similarly, $\writerssym$ returns the set of all threads that are currently writing, and $\idlesym$ returns the set of all threads that are currently not reading and not writing. The mappings $\usrsym$, $\ufrsym$, $\uswsym$ and $\ufwsym$ perform update operations on the state of the register, corresponding to whether the most recent interaction of the register was an invocation ($\usrsym$) or response ($\ufrsym$) of a read, or an invocation ($\uswsym$) or a response ($\ufwsym$) of a write. The update operation $\uswsym$ also takes  the write value into account.

  In the remainder of this section we shall first present our models of MWMR safe, regular and atomic registers, and then comment on the representation of these models in mCRL2.
  
\subsection{MWMR Safe Registers}\label{sec:safe-reg}

  Lamport's SWMR safe register model (see \autoref{def:SWMR-safe}) accounts for how reads and writes behave when they do not have overlapping writes, and how reads behave when they do have overlapping writes. To generalise Lamport's notion to MWMR registers, we need to define how writes behave when they have overlapping writes. We follow Raynal's approach and define that when a write has overlapping writes, then its effect is that some arbitrary value in $\mathbb{D}$ is written to the register \cite{Raynal13}.

  \begin{figure}
\begin{multline*}
    \SReg(d:\Data,s:\SStatus) = \\
    \sum_{i\in\TID}
    \left(\begin{array}{ll}
        & (i\in\idle{s}) \then
             \startread[i]{}\co\SReg(d,\usr[i]{s}) \\
      + & (i\in\idle{s}) \then
             \sum_{d'\in\Data} \startwrite[i]{d'}\co\SReg(d,\usw[i]{d',s})\\
      + & (i\in\readers{s}\wedge\neg\overlap[i]{s}) \then
             \finishread[i]{d}\co\SReg(d,\ufr[i]{s}) \\
      + & (i\in\readers{s}\wedge\overlap[i]{s}) \then
             \sum_{d'\in\Data} \finishread[i]{d'}\co\SReg(d,\ufr[i]{s})\\
      + & (i\in\writers{s}\wedge\neg\overlap[i]{s}) \then
             \finishwrite[i]\co\SReg(\nextval{s},\ufw[i]{s})\\
      + & (i\in\writers{s}\wedge\overlap[i]{s}) \then
             \sum_{d'\in\Data} \finishwrite[i]\co\SReg(d',\ufw[i]{s})
    \end{array}
    \right)
\end{multline*} \caption{Safe register model}\label{tab:procsafe}
\end{figure}

  Our process-algebraic definition of a MWMR safe register is shown in \autoref{tab:procsafe}. The equation defines the behaviour of processes $R_s(d,s)$; the parameter $d\in\Data$ reflects the current value of the register, and the parameter $s\in\SStatus$ reflects its current state.  For the behaviour of the safe register it must be determined for every read or write operation of a thread whether, during its interaction with the register, there was an overlapping write operation by some other thread. Therefore, in addition to the functions specified in \autoref{eq:standardmappings}, we presuppose on $\SStatus$ a predicate $\overlapsym[i]$ such that $\overlap{s}$ holds if during the interaction of thread $i$ with the register there was an overlapping write by another thread. At the response of a write that is not overlapping with other writes, the current value $d$ of the register needs to be replaced by the write value. Hence, whenever a write is invoked, the write value is stored in $s$ through $\usw{s}$; this value can be retrieved with the mapping $\nextvalsym:\SStatus\rightarrow\Data$ if the write had no overlapping writes. If there were overlapping writes, $\nextvalsym$ is undefined.
  The right-hand side of the equation in \autoref{tab:procsafe} specifies the behaviour of the register using standard process-algebraic operations: $\cdot$ denotes sequential composition, $+$ denotes non-deterministic choice, $\rightarrow$ denotes a conditional, and $\sum$ denotes choice quantification \cite{mCRL2language}.

  The definition in \autoref{tab:procsafe} induces a transition relations $\step{a}$ ($a\in A$)
  on the set of tuples $\langle d,s\rangle$ ($d\in\Data$, $s\in\SStatus$).
  For instance, if $i\in\readers{s}$ and $\neg\overlap[i]{s}$, then there is a transition
  \begin{equation*}
      \langle d,s\rangle \step{\finishread[i]{d}}\langle d,\ufr[i]{s}\rangle\enskip,
  \end{equation*}
  according to the third summand of the definition in \autoref{tab:procsafe}; and if $i\in\writers{s}$ and $\overlap[i]{s}$, then, for every $d'\in\Data$, there is a transition
  \begin{equation*}
      \langle d,s\rangle \step{\finishwrite[i]}\langle d', \usw{s}\rangle\enskip,
  \end{equation*}
  according to the last summand of the definition in \autoref{tab:procsafe}.

  We let $s_{\mathit{init}}$ denote the \emph{initial state} of the safe register, and we define $\idle{s_{\mathit{init}}}=\TID$, $\writers{s_{\mathit{init}}}=\readers{s_{\mathit{init}}}=\emptyset$,
  $\overlap[i]{s}$ is false, and $\nextval{s}=d_{\mathit{init}}$.
  Henceforth, we shall abbreviate $R_s(d_{\mathit{init}},s_{\mathit{init}})$ by $R_s$.
  A \emph{trace} of $R_s$ is a finite or infinite sequence $a_0a_1\cdots a_{n-1}a_n\cdots$ of elements of $A$ such that there exist $d_0,d_1,d_2, \dots,d_n,\ldots\in\Data$ and $s_0,s_1,s_2,\dots,s_n,\ldots\in\SStatus$ with $d_0=d_{\mathit{init}}$ and $s_0=s_{\mathit{init}}$ and
    $\langle d_0,s_0\rangle
      \step{a_0}
    \langle d_1,s_1\rangle
      \step{a_1}
      \cdots
      \step{a_{n-1}}
    \langle d_n,s_n\rangle
      \step{a_{n}}
      \cdots$.
  We denote by $\STraces$ the set of all traces of $R_s$.
  A trace $\alpha\in\STraces$ is \emph{complete} if, for all $i\in\TID$, either $\proj{\alpha}{i}$ is infinite or $\proj{\alpha}{i}$ ends with a response. A \emph{single-writer} trace is a trace in which all invocations and responses of write operations are by the same thread.

  We argue that there is a one-to-one correspondence between the single-writer safe schedules and the single-writer complete traces of $R_s$. First, note that schedules and complete traces adhere to exactly the same restrictions regarding the order in which invocations and responses of read and write operations can occur: the invocation of an operation by some thread can only occur when that same thread is not currently executing another operation, and a response to some thread for an operation can only occur if the last interaction of that thread was, indeed, an invocation of that same operation. Write values are not restricted in schedules, nor in complete traces. Moreover, in the single-writer case the value of the parameter $d$ of the process $R_s$ will always be the write value of write operation of which the execution finished last. Finally, note that both in schedules and in complete traces of $R_s$, if a read operation overlaps with a write operation, then it may return any value, and if it does not, then it will, indeed, return the value of the most recent write operation.
  
  \begin{proposition}
      Every single-writer safe schedule is a trace of $R_s$, and every complete single-writer trace of $R_s$ is a safe schedule.
  \end{proposition}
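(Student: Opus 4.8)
The plan is to prove the two inclusions separately, in each case working along finite prefixes of the sequence and maintaining a correspondence between the operational state $\langle d,s\rangle$ reached by $\SReg$ and the schedule-theoretic notions of \autoref{def:SWMR-safe}. Concretely, I would carry the following invariant on the state $\langle d_k,s_k\rangle$ reached after the prefix $a_0\cdots a_{k-1}$: (i) $\readers{s_k}$, $\writers{s_k}$ and $\idle{s_k}$ record exactly which threads currently have a pending read, a pending write, or neither; (ii) for each $i$, the predicate $\overlap[i]{s_k}$ holds precisely when thread $i$'s pending operation has so far overlapped a write by another thread; and (iii) $d_k$ is the value written by the most recently completed write in $a_0\cdots a_{k-1}$, understood to be $w_{\mathit{init}}$ (with value $d_{\mathit{init}}$) when no other write has completed. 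The decisive simplification of the single-writer setting is that no write overlaps another write, so $\overlap[i]{s}$ is false while thread $i$ writes and the last summand of \autoref{tab:procsafe} is never enabled; every write response is taken via the fifth summand, setting $d$ to $\nextval{s}$, the value thread $i$ wrote on invocation. This is exactly what keeps invariant (iii) true along any single-writer trace.

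For the first inclusion, let $\sigma$ be a single-writer safe schedule. Reading $\sigma$ from left to right, I would inductively build states $\langle d_{\mathit{init}},s_{\mathit{init}}\rangle\step{a_0}\langle d_1,s_1\rangle\step{a_1}\cdots$ by applying the update functions dictated by each action, and check that $a_k$ is enabled from $\langle d_k,s_k\rangle$ by the matching summand. As $\proj{\sigma}{i}$ alternates invocations and matching responses per thread, invariant (i) supplies the thread guard of the intended summand, so invocations and write responses (the latter carrying no value) are immediate. The only real case is a read response of a read $r$. If $r$ has no overlapping writes then, by (ii), $\neg\overlap[i]{s_k}$ holds, so only the third summand is available and it forces the return value $d_k$; but since $r$ has no overlapping writes no write completes during the interval of $r$, so $d$ is unchanged from $r$'s invocation to its response and by (iii) equals the value written by the last write completed before $r$ was invoked, i.e.\ the write value of the $<_\sigma$-maximum of $\fixedwrites{\sigma,r}$---which by safety of $\sigma$ is exactly the value $r$ returns. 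If $r$ does have overlapping writes, the fourth summand admits any return value, matching the freedom the schedule enjoys.

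For the second inclusion, let $\alpha$ be a complete single-writer trace with witnessing states $\langle d_k,s_k\rangle$. That $\alpha$ is a schedule is immediate from the thread guards, which force per thread that invocations and responses strictly alternate beginning with an invocation; completeness adds precisely the remaining requirement that a finite $\proj{\alpha}{i}$ ends with a response. For safety, take a read $r$ without overlapping writes and return value $v$: such a response can only come from the third summand, so $v=d_k$ at that point, and since $r$ has no overlapping writes $d$ has not changed since $r$'s invocation; by (iii), $v$ is therefore the value written by the last write completed before $r$ was invoked, i.e.\ the write value of the $<_\alpha$-maximum of $\fixedwrites{\alpha,r}$, as \autoref{def:SWMR-safe} requires. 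Reads with overlapping writes are unconstrained, in agreement with the fourth summand, so $\alpha$ is a safe schedule.

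I expect the main obstacle to be invariant (iii) and its interface with \autoref{def:SWMR-safe}: one has to argue that for single-writer traces $d$ genuinely tracks the value of the most recently completed write, and that this value coincides with the write value of the $<_\sigma$-maximum of $\fixedwrites{\sigma,r}$ at every non-overlapping read. Invariants (i) and (ii) are comparatively direct, since they only record the intended meaning of the bookkeeping functions of \autoref{eq:standardmappings} together with the overlap predicate $\overlapsym[i]$ and the updates $\usrsym,\ufrsym,\uswsym,\ufwsym$; but making the whole correspondence fully rigorous does rely on these abstract state functions behaving exactly as informally specified. Once the correspondence is established, each summand of \autoref{tab:procsafe} lines up with the corresponding clause of \autoref{def:SWMR-safe}, and the remaining bookkeeping is routine.
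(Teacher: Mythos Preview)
Your proposal is correct and follows essentially the same approach as the paper: the paper only offers an informal paragraph immediately preceding the proposition, arguing that (a) the invocation/response discipline enforced by the guards matches the definition of schedule, (b) in the single-writer case the parameter $d$ always records the write value of the most recently completed write, and (c) the two read-response summands then line up with the two clauses of \autoref{def:SWMR-safe}. Your invariants (i)--(iii) make exactly these three points precise, and your case analysis on the summands is a faithful, more rigorous rendering of that sketch; there is no substantive difference in strategy.
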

  
\subsection{MWMR regular registers}\label{sec:regular-reg}

According to Lamport's definition of SWMR regular registers (see \autoref{def:SWMR-regular}), a read either returns the write value of the $<_{\sigma}$-maximum of $\fixedwrites{\sigma, r}$ or the value written by one of its overlapping writes.
When writes may have overlapping writes, then $\fixedwrites{\sigma,r}$ may not have a $<_{\sigma}$-maximum. It is then necessary to determine, for every read $r$, which of the $<_{\sigma}$-maximal elements of $\fixedwrites{\sigma,r}$ should be taken into account when determining the return value of $r$, and to what extent different reads should agree on this choice.

Our considerations are as follows. First, we want our MWMR regular register model to coincide with Lamport's SWMR regular register model when there are no writes overlapping other writes, so that our analyses of algorithms that rely on SWMR regular registers are valid with respect to Lamport's model. Second, our model should be suitable for explicit-state model checking. This precludes any definition that requires keeping track of unbounded information pertaining to the history of the computation. To limit the amount of information that the model is required to remember, we let the register commit to a unique value when there are no active writes. In this respect, our model deviates from three of the four models considered in \cite{Shao11}; in \autoref{sec:compare_shao} we provide a more detailed comparison.

To be consistent with Lamport's SWMR regular registers, a read $r$ should be able return the value of any \emph{overlapping} write. To determine which of the elements of the \emph{fixed} writes is taken into account when determining the return value of $r$, our model non-deterministically inserts a special \emph{order} action $\orderwrite{}$ somewhere between the invocation and the response of every write of every thread $i\in\TID$.
One may think of the order action as marking the moment at which the write truly takes place. Note that this order action is purely for modelling purposes, we make no claims on the implementation of a regular register.
The write value associated with the most recent order action preceding the invocation of a read (or the initial value if no order actions have occurred yet) is taken into account as possible return value for that read. Thus, a serialisation of all writes is generated on-the-fly through the order actions: all read operations agree on the order of the writes.

\begin{figure}[h]
\begin{multline*}
    \RReg(d:\Data,s:\RStatus) = 
    \sum_{i\in\TID}
    \left(\begin{array}{ll}
        & (i\in\idle{s}) \then
             \startread[i]{}\co\RReg(d,\usr[i]{s}) \\
      + & (i\in\idle{s}) \then
             \sum_{d'\in\Data} \startwrite[i]{d'}\co\RReg(d,\usw[i]{d',s})\\
      + & (i\in\readers{s}) \then
             \sum_{d'\in\posval[i]{s}} \finishread[i]{d'}\co\RReg(d,\ufr[i]{s})\\
      + & (i\in\pending{s}) \then \orderwrite[i]\co\RReg(\wval[i]{s},\uow[i]{s})\\
      + & (i\in\writers{s}\wedge i\not\in\pending{s}) \then
             \finishwrite[i]\co\RReg(d,\ufw[i]{s})
    \end{array}
    \right)
\end{multline*}
\caption{Regular register model}\label{tab:procregular}
\end{figure}

Our process-algebraic definition of a MWMR regular register is given in \autoref{tab:procregular}.
Here, $\RStatus$ denotes the set of possible states of the MWMR regular register. 
The register keeps track of the readers, writers and idle threads, similar to the safe register. 
It additionally keeps track of the set $\pending{s}$ of threads that have invoked a write but for which the order action has not yet occurred. The update function $\uowsym : \RStatus \rightarrow \RStatus$
associated with the order action $\orderwrite$ removes thread $i$ from $\pending{s}$. For every thread $i \in \pending{s}$, $\wval[i]{s}$ is the write value of that write; it is used to correctly update the current value $d$ of the register when $\orderwrite[i]$ occurs. For every thread $i \in \readers{s}$, $\posval[i]{s}$ is the set of values that a read $r$ invoked by thread $i$ may return. That is, it consist of the values of all writes overlapping with $r$ (thus far) and the value of the write with the most recent $\orderwrite[j]$ before the invocation of $r$.

For $i\in\TID$, let $A_i^r = A_i\cup\{ \orderwrite[i]\}$, and let $A^r = \bigcup_{i \in \mathbb{T}}A_i^r$.
The process definition in \autoref{tab:procregular} induces transition relations $\step{a}$ ($a\in A^r$) on the set of tuples $\langle d,s\rangle$ ($d\in\Data$, $s\in \RStatus$).
As before $\idle{s_{\mathit{init}}} = \mathbb{T}$, $\readers{s_{\mathit{init}}} = \writers{s_{\mathit{init}}} = \emptyset$. We also have $\pending{s_{\mathit{init}}} = \emptyset$, and $\posval[i]{s_{\mathit{init}}} = \emptyset$ for all $i \in \mathbb{T}$. The initial values for $\wval[i]{s_{\mathit{init}}}$ do not matter, since $\wval[i]{s}$ only matters when $i \in \pending{s}$.
We use $R_r$ to abbreviate $R_r(d_{\mathit{init}},s_{\mathit{init}})$, and define a trace of $R_r$, also as before, as a finite or infinite sequence of elements of $A^r$ appearing as labels in a transition sequence starting at $\langle d_{\mathit{init}},s_{\mathit{init}}\rangle$. We denote by $\RTraces$ the set of all traces of $R_r$.

Compared to schedules, the traces of $R_r$ have extra  $\orderwrite$ actions. If $\alpha$ is a finite or infinite sequence of elements of $A^r$, then we denote by $\bar{\alpha}$ the sequence of elements of $A$ obtained from $\alpha$ by deleting all occurrences of $\orderwrite$ ($i\in\TID$). We can then formulate a correspondence between the single-writer traces of $R_r$ (i.e., the traces in which all invocations and responses of write operations are by the same thread) and single-writer regular schedules.

  If writes have no overlapping writes, then the most recent order action when a read $r$ is invoked either corresponds to the $<_{\sigma}$-maximum of $\fixedwrites{\sigma, r}$, or to a write that overlaps with $r$.
  In the first case, the set of possible values that can be returned by the read according to our model will coincide with the set of possible values that it can return according to \autoref{def:SWMR-regular}.
  In the latter case, our model allows a subset of the values possible according to \autoref{def:SWMR-regular} to be returned.
  Hence, a read in our model never returns a value that could not be returned according to Lamport's SWMR definition of regular registers.
  Moreover, if there is a trace of $R_r$ in which the order action $\orderwrite$ of a write that overlaps with $r$ occurs before the invocation of $r$, then there also exist a trace in which it occurs after the invocation of $r$.
  Thus, the set of traces described by our model includes all regular schedules according to \autoref{def:SWMR-regular} whenever there are no writes overlapping other writes.
    
\begin{proposition}
    For every single-writer regular schedule $\sigma$ there is a trace $\alpha$ of $R_r$ such that $\bar{\alpha}=\sigma$, and if $\alpha$ is a complete single-writer trace of $R_r$, then $\bar{\alpha}$ is a regular schedule.
\end{proposition}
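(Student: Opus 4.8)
The plan is to prove the two inclusions separately, both resting on a single invariant that pins down the abstract state functions of the regular register of \autoref{tab:procregular}. Concretely, I would first establish that along any trace $\alpha\in\RTraces$, at the state $s$ reached just before a read response $\finishread[i]{v}$, the set $\posval[i]{s}$ consists of exactly two kinds of value: the write value carried by the most recent $\orderwrite$ preceding the matching invocation $\startread[i]$ (or $d_{\mathit{init}}$ if there is none), together with the write values of all writes whose interval overlaps that of the read in $\bar{\alpha}$. Since the model specifies the state functions only through their intended meaning, this invariant---proved by induction on the length of the transition sequence, tracking how $\usrsym$, $\uswsym$, $\uowsym$ and $\ufrsym$ update $\posvalsym$, $\pendingsym$ and the register value $d$---is what lets me pass between the operational model and the schedule-level notions of \autoref{def:SWMR-regular}. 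A companion observation I would record is that in a single-writer trace the writes are totally sequenced (the response of each write precedes the invocation of the next write of the one writing thread), so their $\orderwrite$ actions appear in $\alpha$ in the order given by $<_{\bar{\alpha}}$ on $\writes{\bar{\alpha}}$, and $\fixedwrites{\bar{\alpha},r}$ has a unique $<_{\bar{\alpha}}$-maximum for every read $r$.

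For the first inclusion, given a single-writer regular schedule $\sigma$ I would build $\alpha$ by inserting, for every write, its $\orderwrite$ action immediately before the matching response $\finishwrite[i]$, leaving everything else untouched; then $\bar{\alpha}=\sigma$ by construction. It remains to check that every transition of $\alpha$ is enabled. The invocation and write-response guards are immediate: inserting order actions changes neither the idle/reader/writer status at invocations nor the pending status at a response, since the order action sits directly before it. The only real content is the read responses. With this ``as late as possible'' placement, the most recent $\orderwrite$ before an invocation $\startread[i]$ is exactly the order action of the $<_{\sigma}$-maximum of $\fixedwrites{\sigma,r}$, because a write's order action precedes $\startread[i]$ if, and only if, its response does. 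Hence the committed component of $\posval[i]{s}$ is the max-fixed-write value, while by the invariant $\posval[i]{s}$ also contains every overlapping write value. As $\sigma$ is regular, each read returns one of these two, so its response transition is enabled and $\alpha$ is a genuine trace of $R_r$.

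For the second inclusion, let $\alpha$ be a complete single-writer trace and set $\sigma=\bar{\alpha}$. That $\sigma$ is a schedule follows from the guards, which force the correct per-thread alternation of invocations and responses, together with completeness, which guarantees that every finite projection ends in a response; it is single-writer by hypothesis. To see that $\sigma$ is regular, take any read $r$ with return value $v$. Its response transition required $v\in\posval[i]{s}$, so by the invariant $v$ is either an overlapping write value---and then $r$ already meets \autoref{def:SWMR-regular}---or the value of the write $w$ carrying the most recent $\orderwrite$ before $\startread[i]$. In the latter case the invocation of $w$ precedes $\startread[i]$, so $w$ is relevant for $r$; I then split on whether the response of $w$ also precedes $\startread[i]$. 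If it does, $w$ is fixed, and using that the $\orderwrite$ actions respect $<_{\sigma}$ on writes, no fixed write can be $<_{\sigma}$-above $w$ (it would contribute a strictly later order action still preceding $\startread[i]$), so $w$ is the $<_{\sigma}$-maximum of $\fixedwrites{\sigma,r}$; if it does not, then $w$ overlaps $r$. Either way $v$ is an admissible regular return value, and $\sigma$ is regular.

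The step I expect to be the main obstacle is the first one: because $\posvalsym$, $\pendingsym$ and the register value are given only through informal descriptions of the update functions, the whole argument hinges on stating and proving the $\posvalsym$-invariant precisely, and on being careful that it is exactly the single-writer assumption---total sequencing of the writes and hence of their order actions, and existence of a unique $<_{\sigma}$-maximum of the fixed writes---that makes ``the most recent order action'' coincide with the relevant maximal fixed write. Once the invariant is in place, both inclusions reduce to the routine case analyses sketched above.
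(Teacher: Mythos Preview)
Your proposal is correct and follows essentially the same line of reasoning as the paper's informal justification preceding the proposition: the key observation in both is that, in the single-writer setting, the write whose order action is most recent before a read's invocation is either the $<_{\sigma}$-maximum of the fixed writes or an overlapping write, and the remaining values in $\posval[i]{s}$ come from overlapping writes. Your treatment is more explicit in two respects---you isolate the $\posvalsym$-invariant as a separate lemma, and for the first inclusion you give the concrete ``order action immediately before the response'' construction, whereas the paper only argues abstractly that an order action placed too early can always be delayed---but these are refinements of detail rather than a different approach.
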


\subsection{MWMR atomic registers}\label{sec:atomic-reg}

\autoref{def:SWMR-atomic}, formalising Lamport's notion of SWMR atomic register, straightforwardly generalises to MWMR registers by omitting the single-writer restriction on schedules. Our process-algebraic model should generate the legal serialisation of all operations on-the-fly. To this end, we introduce, for every thread $i$, \emph{execution} actions
$\executeread[i]$ and $\executewrite[i]$ to mark the exact moment at which an operation is treated as occurring. An operation's execution action must, of course, occur between its invocation and response. The value that is returned at the response of a read is the value that the register stored at the moment of that read's execution; the register's stored value is updated to a write's value at that write's execution.

The process-algebraic model of our MWMR atomic register is shown in \autoref{tab:procatomic}. The set of states of $\AReg$ is denoted by $\AStatus$. In addition to the standard update functions, there are  extra update functions $\uersym[i], \uewsym[i] : \AStatus \rightarrow \AStatus$ for the execution actions. The effect of applying $\uersym[i]$ on $s$ is to store the current value $d$ of the register as the value that should be returned at the response of the active read by thread $i$; this value can then be retrieved with $\vals{s}$, and $\vals{s}=\undefsymb$ until then. The effect of applying $\uewsym[i]$ is to update the current value $d$ of the register to the write value of the active write by thread $i$; this value can also be retrieved with $\vals{s}$, and $\vals{s}=\undefsymb{}$ thereafter.
Note that, by setting $\vals{s}$ to $\undefsymb$ before a read has been executed and after a write has been executed, we can use $\vals{s}$ in combination with $\readers{s}$ and $\writers{s}$ to determine whether the execution of an operation has taken place.

\begin{figure}[h]
\begin{multline*}
    \AReg(d:\Data,s:\AStatus) = \\
    \sum_{i\in\TID}
    \left(\begin{array}{ll}
        & (i\in\idle{s}) \then
             \startread[i]{}\co\AReg(d,\usr[i]{s}) \\
      + & (i\in\idle{s}) \then
             \sum_{d'\in\Data} \startwrite[i]{d'}\co\AReg(d,\usw[i]{d',s})\\
      + & (i\in\readers{s} \wedge \vals[i]{s} = \undefsymb) \then\executeread[i]\co\AReg(d,\uer[i]{s}) \\
      + & (i\in\writers{s}\wedge\vals[i]{s} \neq \undefsymb) \then \executewrite[i]\co\AReg(\vals[i]{s},\uew[i]{s}) \\
      + & (i\in\readers{s}\wedge\vals[i]{s} \neq \undefsymb)\then \finishread[i]{\vals[i]{s}}\co\AReg(d,\ufr[i]{s}) \\
      + & (i\in\writers{s}\wedge\vals[i]{s} = \undefsymb)\then\finishwrite[i]\co\AReg(d,\ufw[i]{s})
      \end{array}
    \right)
\end{multline*}
\caption{Atomic register model}\label{tab:procatomic}
\end{figure}

For $i\in\TID$, let $A_i^a = A \cup \{\executeread[i], \executewrite[i]\}$, and let $A^a = \bigcup_{i \in \mathbb{T}}A_i^a$.
The process definition in \autoref{tab:procatomic} induces transition relations $\step{a}$ ($a\in A^a$) on the set of tuples $\langle d,s\rangle$ ($d\in\Data$, $s\in \AStatus$).
As before $\idle{s_{\mathit{init}}}=\TID$ and $\readers{s_{\mathit{init}}} = \writers{s_{\mathit{init}}} = \emptyset$; the initial values for $\vals[i]{s_{\mathit{init}}}$ do not matter.
We use $R_a$ to abbreviate $R_a(d_{\mathit{init}},s_{\mathit{init}})$, and define a trace of $R_A$, also as before, as a finite or infinite sequence of elements of $A^A$ appearing as labels in a transition sequence starting at $\langle d_{\mathit{init}},s_{\mathit{init}}\rangle$. We denote by $\ATraces$ the set of all traces of $R_a$.

Compared to schedules, the traces of $\AReg$ have extra $\executeread[i]$ and $\executewrite[i]$ actions. If $\alpha$ is a finite or infinite sequence of elements of $A^{a}$, then we denote by $\bar{\alpha}$ the sequence obtained from $\alpha$ by deleting all occurrences of $\executeread[i]$ and $\executewrite[i]$ for $i \in \mathbb{T}$.
The correspondence between atomic schedules and complete traces of $\AReg$ follows straightforwardly. It suffices to prove that $\AReg$  admits exactly those traces $\alpha$ such that there exists a legal serialisation of $\bar{\alpha}$. 
To this end, note that the execute actions provide such a serialisation, and the definition of $\AReg$ has the responses of operations behave in accordance with this serialisation.
\begin{proposition}
    For every atomic schedule $\sigma$ there is a trace $\alpha$ of $\AReg$ such that $\bar{\alpha}=\sigma$, and if $\alpha$ is a complete trace of $\AReg$, then $\bar{\alpha}$ is an atomic schedule.
\end{proposition}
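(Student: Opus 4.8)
The plan is to prove the equivalence hinted at just before the statement: a sequence $\alpha$ over $A^a$ is a complete trace of $\AReg$ exactly when $\bar{\alpha}$ is a (valid) schedule for which the order of the execute actions in $\alpha$ induces a legal serialisation of $\ops{\bar{\alpha}}$. Both implications of the proposition then fall out by reading this equivalence in the two directions. Throughout I write $\sersym{}$ for the total order on $\ops{\bar{\alpha}}$ defined by $o \ser o'$ iff the execute action of $o$ precedes that of $o'$ in $\alpha$, with $w_{\mathit{init}}$ stipulated to be the $\sersym{}$-minimum corresponding to the initial value.

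For the ``complete trace $\Rightarrow$ atomic schedule'' direction, let $\alpha$ be a complete trace of $\AReg$. First I would note that deleting execute actions does not disturb the per-thread alternation of invocations and responses enforced by the process equation, so $\bar{\alpha}$ is a schedule. Completeness forces every invoked operation to eventually respond, and since the guards $\vals[i]{s}=\undefsymb$ and $\vals[i]{s}\neq\undefsymb$ require each operation to take its execute action strictly between its invocation and its response, every operation of $\ops{\bar{\alpha}}$ has a unique execute action in $\alpha$; hence $\sersym{}$ is a total order on $\ops{\bar{\alpha}}$. Consistency with $<_{\bar{\alpha}}$ is immediate: if $o<_{\bar{\alpha}}o'$ then the response of $o$ precedes the invocation of $o'$, and as the execute of $o$ lies before its response and that of $o'$ after its invocation, the execute of $o$ precedes that of $o'$. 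Legality follows by tracking the register value: at a read's execute the current $d$ is stored via $\uersym[i]$ and returned unchanged at the response, and this $d$ is exactly the value installed by the most recent preceding write-execute (or $d_{\mathit{init}}$), i.e.\ the write value of the $\sersym{}$-maximal write preceding the read. Thus $\sersym{}$ is a legal serialisation and $\bar{\alpha}$ is atomic.

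For the converse, let $\sigma$ be atomic with legal serialisation $(\ops{\sigma},\sersym{})$. I would construct $\alpha$ by inserting, for each operation, its execute action into $\sigma$ so that the execute actions occur in $\sersym{}$-order while each stays between its operation's invocation and response. Concretely, enumerate the operations as $o_1,o_2,\dots$ in increasing $\sersym{}$-order and place execute actions greedily, inserting the execute of $o_k$ as early as the already-placed executes and the invocation of $o_k$ permit. One then checks from the semantics that the resulting $\alpha$ is a genuine trace: each execute is enabled (a read has $\vals[i]{s}=\undefsymb$ until it executes, a write has $\vals[i]{s}\neq\undefsymb$ from its invocation until its execute), and because the executes occur in $\sersym{}$-order the value captured at a read's execute is the write value of the most recent $\sersym{}$-preceding write, which by legality of $\sersym{}$ is precisely the return value recorded for that read in $\sigma$. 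Since only execute actions are inserted, $\bar{\alpha}=\sigma$.

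The crux is the feasibility of this insertion, that is, that consistency of $\sersym{}$ with $<_{\sigma}$ makes the interval constraints and the $\sersym{}$-order jointly satisfiable. The decisive observation is that $o_j \ser o_k$ rules out $o_k<_{\sigma}o_j$, so the invocation of $o_j$ precedes the response of $o_k$; hence for every $j\le k$ the invocation of $o_j$ lies before the response of $o_k$, and placing the execute of $o_k$ just after the latest invocation seen so far keeps it before the response of $o_k$ (and after all earlier executes and its own invocation). Finiteness of $\TID$ bounds the number of simultaneously active operations, so only finitely many execute actions are ever inserted between two consecutive symbols of $\sigma$, and $\alpha$ is a well-formed finite or $\omega$-sequence. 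This scheduling argument, realising a prescribed total order by insertion points confined to given intervals, is where the real work lies; the remaining checks are routine inductions on trace length.
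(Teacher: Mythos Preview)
Your proposal is correct and follows the same idea the paper sketches: the execute actions furnish the legal serialisation in one direction, and conversely a given legal serialisation dictates where to insert execute actions. The paper leaves the insertion argument implicit, so your explicit greedy placement and the observation that at most $|\TID|$ executes land in any gap supply exactly the detail the paper omits.
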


\subsection{mCRL2 implementation}\label{sec:mcrl2model}

The mCRL2 toolset \cite{mCRL2toolset} provides tools for model checking and equivalence checking.
Models are defined in the mCRL2 language \cite{mCRL2language}, which comprises a process-algebraic specification language and facilitates the algebraic specification of data types.
Properties defined in the modal $\mu$-calculus can be checked on those models.
One nice feature of mCRL2 is that when a property does not hold a counterexample can be generated.
For more information we refer to \cite{mCRL2language} as well as the toolset's website\footnote{\href{https://www.mcrl2.org}{\texttt{https://www.mcrl2.org}}}.

We have implemented the models presented in Figures~\ref{tab:procsafe}, \ref{tab:procregular} and \ref{tab:procatomic} in the mCRL2 language.
By adding processes that model the threads executing the desired algorithm in a manner compatible with the interface of the register models, we can verify the same algorithm easily under different atomicity assumptions.
An added benefit is that we can assume different levels of atomicity for different registers simultaneously, so that we pinpoint exactly to what extent the algorithm is robust for non-atomicity.
The model can be found as part of the examples delivered with the mCRL2 distribution\footnote{\url{https://github.com/mCRL2org/mCRL2/tree/master/examples/academic/non-atomic_registers} (\texttt{972629b})}.

The mCRL2 language has support for standard data types such as sets, bags and arrays (implemented as mappings) as well an algebraic specification facility to define new datatypes.
This allows us to model the register models staying close to the process-algebraic models presented in this paper.

\section{Alternative definitions of MWMR regular registers}\label{sec:compare_shao}

In \cite{Shao11} four definitions for MWMR regular registers are proposed. These are formulated as conditions on schedules.
We discuss how our definition of MWMR regular registers relates to these definitions.

The following definition captures the weakest condition on schedules presented in \cite{Shao11}.
\begin{definition}\label{def:shao-weak}
    A schedule $\sigma$ satisfies the \emph{weak} condition if, for every read operation $r$ in $\ops{\sigma}$, there exists a legal serialisation of $\writes{\sigma} \cup \{r\}$. 
\end{definition}

It follows straightforwardly from our MWMR regular register definition that any complete trace $\alpha\in\RTraces$ , when transformed into a schedule $\bar{\alpha}$ by deleting the order actions, satisfies \autoref{def:shao-weak}. As explained in \autoref{sec:regular-reg}, our model generates a serialisation of all writes. For every read $r$ by thread $i$, it returns either the value of the last write in this serialisation before $\startread[i]$, or the value of one of the writes overlapping this read.
In both cases, we may obtain a legal serialisation of $\writes{\bar{\alpha}} \cup \{r\}$ by taking the serialisation of writes associated with $\bar{\alpha}$ and inserting $r$ right after the write that it reads from. This is consistent with $<_{\sigma}$ because the serialisation of the writes is, and $r$ will only be placed after a write that either has its response before the invocation of $r$, or that $r$ overlaps with.

\begin{proposition}
    If $\alpha\in\RTraces$ is complete, then the schedule $\bar{\alpha}$ satisfies the weak condition.
\end{proposition}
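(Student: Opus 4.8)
The plan is to build the required serialisations from the order actions recorded in $\alpha$. First I would read off from $\alpha$ the order $\sersym$ in which the actions $\orderwrite[j]$ occur, declaring $w_{\mathit{init}}$ to be $\sersym$-least. Since $\alpha$ is complete, each thread's projection ends in a response or is infinite, so every write of $\bar{\alpha}$ has fired its unique order action (the response $\finishwrite[i]$ requires $i\notin\pending{s}$), and hence $\sersym$ is a total order on $\writes{\bar{\alpha}}$. The routine first step is to verify that $\sersym$ is consistent with $<_{\bar{\alpha}}$: if $w <_{\bar{\alpha}} w'$ then the response of $w$ precedes the invocation of $w'$, and since the order action of any write lies between its invocation and its response, the order action of $w$ precedes that of $w'$, whence $w \ser{} w'$.

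Now fix a read $r$ of thread $i$ with return value $x$, so that $\finishread[i]{x}$ is enabled; by the third summand of \autoref{tab:procregular} this means $x \in \posval[i]{s}$ in the state $s$ reached just before it. By the intended reading of $\posvalsym[i]$ described beneath \autoref{tab:procregular}, the value $x$ enters $\posval[i]{s}$ in one of two ways: either $x$ is the value carried by the most recent order action preceding $\startread[i]$, for some write $w_r$ (possibly $w_{\mathit{init}}$), or $x$ is the value of a write $w_r$ that overlaps $r$. In both cases $x$ is the write value of $w_r$ and $w_r$ is relevant for $r$, i.e. $r \not<_{\bar{\alpha}} w_r$: in the first case the invocation of $w_r$ precedes its order action and hence $\startread[i]$, and in the second case $w_r$ and $r$ overlap.

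The crux — and the step I expect to be the main obstacle — is to show that no write fixed for $r$ lies $<_{\bar{\alpha}}$-above $w_r$, i.e. there is no $w \in \fixedwrites{\bar{\alpha},r}$ with $w_r <_{\bar{\alpha}} w$. This must be checked against both ways $w_r$ can arise. If $w_r$ carries the last order action before $\startread[i]$ and some fixed $w$ satisfied $w_r <_{\bar{\alpha}} w$, then by consistency of $\sersym$ the order action of $w$ follows that of $w_r$, while $w$ being fixed forces its response, hence its order action, before $\startread[i]$; this contradicts $w_r$ carrying the \emph{last} order action before $\startread[i]$. If instead $w_r$ overlaps $r$ and some fixed $w$ satisfied $w_r <_{\bar{\alpha}} w$, then the invocation of $w$ follows the response of $w_r$, which by the overlap follows $\startread[i]$; thus $w$ is invoked after $\startread[i]$, contradicting that a fixed write responds before $\startread[i]$. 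Either way, nothing in $\writes{\bar{\alpha}}$ is $<_{\bar{\alpha}}$-strictly between $w_r$ and $r$.

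To conclude I would take any linear extension of $(\writes{\bar{\alpha}} \cup \{r\},\,<_{\bar{\alpha}})$ in which $w_r$ immediately precedes $r$; the previous paragraph together with $r \not<_{\bar{\alpha}} w_r$ guarantees that such an extension exists, since nothing is forced strictly between $w_r$ and $r$ and $r$ is not forced below $w_r$. Being a linear extension it is consistent with $<_{\bar{\alpha}}$, and it is legal for $r$ because the most recent write preceding $r$ in it is exactly $w_r$, whose value is the return value $x$; this is precisely what \autoref{def:shao-weak} demands of a legal serialisation of $\writes{\bar{\alpha}} \cup \{r\}$. As $r$ was arbitrary, $\bar{\alpha}$ satisfies the weak condition. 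I would finally remark why this per-read construction, rather than splicing $r$ into the single global order $\sersym$, is needed: an overlapping $w_r$ may receive its order action before a write $w$ that is nonetheless fixed for $r$ (because $w$, though ordered later, completes before $\startread[i]$ while $w_r$ is still running), so reusing $\sersym$ verbatim could force a fixed write after $r$; the freedom to reorder $\sersym$-incomparable, i.e. mutually overlapping, writes separately for each read is what repairs this.
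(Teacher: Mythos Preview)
Your argument is correct and in fact more careful than the paper's own justification. Both you and the paper start from the same observation: the order actions in $\alpha$ yield a total order $\sersym$ on $\writes{\bar\alpha}$ consistent with $<_{\bar\alpha}$, and every read $r$ obtains its return value either from the write whose order action is the last one preceding $\startread[i]$ or from some overlapping write $w_r$. The paper then simply inserts $r$ into $\sersym$ immediately after $w_r$ and asserts the result is consistent with $<_{\bar\alpha}$. You instead discard $\sersym$ and build, for each $r$ separately, a fresh linear extension of $<_{\bar\alpha}$ on $\writes{\bar\alpha}\cup\{r\}$ with $w_r$ immediately preceding $r$, after first establishing the crucial fact that no fixed write lies $<_{\bar\alpha}$-above $w_r$.

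Your final remark pinpoints exactly the scenario the paper's sketch glosses over: an overlapping $w_r$ whose order action fired \emph{before} that of some write $w$ which nevertheless completes before $\startread[i]$. Then $w_r\ser{}w$ while $w<_{\bar\alpha}r$, so splicing $r$ right after $w_r$ in $\sersym$ puts the fixed write $w$ after $r$, violating consistency. Your per-read linear extension of $<_{\bar\alpha}$ handles this because $w_r$ and $w$ overlap and are therefore $<_{\bar\alpha}$-incomparable, so you are free to place $w$ before $w_r$. One small slip: in that last paragraph you write ``$\sersym$-incomparable'', but $\sersym$ is total; you mean $<_{\bar\alpha}$-incomparable. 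In sum, the paper's route buys brevity but leaves this case unaddressed; your route is only marginally longer and closes the gap.
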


In all our MWMR register definitions it is the case that when no writes are active on a register, it stores a unique value. It reduces the burden of storing elaborate information on the execution history of the register, as would be necessary with the definitions of \cite{Shao11}, and thus leads to a smaller statespace.
A consequence of our choice is that not all schedules satisfying the weak condition can be generated by our model.

\begin{figure}[htb]
     \centering
     \begin{subfigure}[b]{0.4\textwidth}
         \centering
         \includegraphics[width=\textwidth]{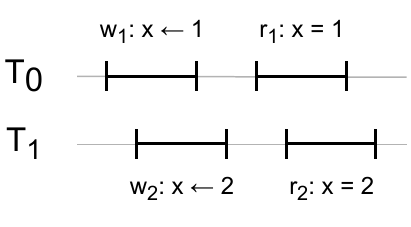}
         \caption{A schedule allowed by the weak, reads-from and no-inversion definitions but not by our regular register model.}
         \label{fig:weakcounter}
     \end{subfigure}
     \hfill
     \begin{subfigure}[b]{0.4\textwidth}
         \centering
         \includegraphics[width=\textwidth]{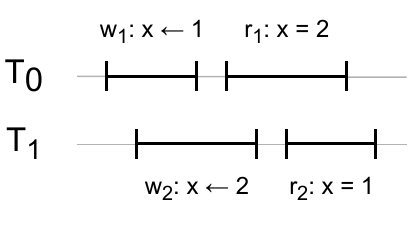}
         \caption{A schedule allowed by our regular register model but not by the write-order definition.}
         \label{fig:weakordercounter}
     \end{subfigure}
        \caption{Schedules demonstrating the differences between our regular register model and the definitions in \cite{Shao11}. We illustrate these schedules on a timeline, where an operation is drawn from its invocation to its response.}
        \label{fig:definitioncounters}
\end{figure}

\begin{example}
Consider the schedule depicted in \autoref{fig:weakcounter}. It is argued in \cite[Figure 6]{Shao11} that it satisfies the weak condition, but it cannot be generated by our regular register model $R_r$ because once $w_1$ and $w_2$ have ended, the register will have stored a unique value (either 1 or 2). Hence, the return values of $r_1$ and $r_2$ cannot be different.
Note that, for the same reason, the schedule cannot be generated by our safe register model $R_s$.
\end{example}

As illustrated in the preceding example, there exist schedules satisfying the weak condition that cannot be generated by our safe register model $R_s$. Conversely, it is easy to see that there exist complete traces generated by our safe register model $R_s$ (e.g., with overlapping writes resulting in a value that is not written by any of the writes) that do not satisfy the weak condition.

  The second condition in \cite{Shao11} associates with every read operation a serialisation and formulates a consistency requirement on these serialisations. If $r\in\reads{\sigma}$, then an \emph{$r$-serialisation} is a serialisation $\sersym[r]$ on $\relwrites{\sigma}\cup\{r\}$.\footnote{By considering serialisations of the relevant writes for $r$, instead of all writes, we deviate from \cite{Shao11}. Since a serialisation $\sersym[]$ on $\writes{\sigma}\cup\{r\}$ must be consistent with $<_{\sigma}$, we will have that $r \ser[] w$ for all $w\in\writes{\sigma}\setminus\relwrites{\sigma}$. It follows that the restriction of a serialisation $\sersym[]$ on $\writes{\sigma}\cup\{r\}$ to $\relwrites{\sigma}\cup\{r\}$ is an $r$-serialisation, and $\sersym[]$ is legal if, and only if, its restriction is.}
  
\begin{definition}\label{def:shao-write-order}
    A schedule $\sigma$ satisfies \emph{write-order} if for each read $r$ in $\ops{\sigma}$ there exists a legal serialisation $\ser_{r}$ of $\relwrites{\sigma} \cup \{r\}$ satisfying the following condition: for all reads $r_1$, $r_2$ in $\ops{\sigma}$, and for all writes $w_1, w_2 \in \relwrites{\sigma, r_1} \cap \relwrites{\sigma, r_2}$ it holds that $w_1 \ser_{r_1} w_2$ if and only if $w_1 \ser_{r_2} w_2$.
\end{definition}

\begin{proposition}\label{prop:weakorderreg}
    For every schedule $\sigma$ satisfying the write-order condition, there exists a trace $\alpha$ in $\RTraces$ such that $\bar{\alpha} = \sigma$.
\end{proposition}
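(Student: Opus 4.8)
The plan is to reduce the claim to the construction of a single global serialisation of all writes that simultaneously refines every $r$-serialisation $\sersym[r]$ (on the writes relevant for $r$) and is consistent with $<_{\sigma}$. Once such a total order $\sersym$ on $\writes{\sigma}$ is available, I would build $\alpha$ from $\sigma$ by inserting, for each write, its order action $\orderwrite[i]$ somewhere between the matching $\startwrite[i]{d}$ and $\finishwrite[i]$, placed so that the order actions occur in $\sersym$-order. Because $\sersym$ is consistent with $<_{\sigma}$, and at most one write per thread is active at any moment (so only finitely many order actions compete for any gap, even for infinite $\sigma$), such a placement always exists; deleting the order actions returns $\sigma$, so $\bar{\alpha}=\sigma$ by construction. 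It then remains to check that $\alpha$ is genuinely a trace of $R_r$, i.e.\ that every read returns a value that lies in $\posval[i]{s}$ at the corresponding state $s$ at the moment of its response.

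To obtain $\sersym$ I would define a relation $\sqsubset$ on $\writes{\sigma}$ by setting $w \sqsubset w'$ iff either $w <_{\sigma} w'$, or there is a read $r$ with $w,w' \in \relwrites{\sigma,r}$ and $w \ser[r] w'$; then I would take $\sersym$ to be any linear extension of the transitive closure of $\sqsubset$. The crux of the entire proof is to show that $\sqsubset$ is acyclic, which I would argue by contradiction using a cycle of minimum length. First, any $<_{\sigma}$-edge in such a cycle can be eliminated together with its successor edge: if $w <_{\sigma} w'$ is followed by another $<_{\sigma}$-edge, transitivity of $<_{\sigma}$ shortcuts it, whereas if it is followed by an $\sersym[r']$-edge $w' \ser[r'] w''$, then $w$ is itself relevant for $r'$ (its invocation precedes that of $w'$, which in turn precedes the response of $r'$), so consistency of $\sersym[r']$ with $<_{\sigma}$ gives $w \ser[r'] w'$ and hence $w \ser[r'] w''$, again shortcutting the cycle. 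Thus a minimal cycle consists of $\sersym$-edges only.

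For a minimal cycle $w_0 \ser[r_0] w_1 \ser[r_1] \cdots \ser[r_{k-1}] w_0$ I would then combine the write-order consistency condition (\autoref{def:shao-write-order}) with the timing of the reads. For two consecutive edges, if the ``second'' write $w_{a+2}$ were also relevant for $r_a$, then, since $w_{a+1}$ and $w_{a+2}$ are relevant for both $r_a$ and $r_{a+1}$, write-order forces $w_{a+1} \ser[r_a] w_{a+2}$, and transitivity of the total order $\sersym[r_a]$ yields a single edge $w_a \ser[r_a] w_{a+2}$, contradicting minimality. Hence $w_{a+2}$ is not relevant for $r_a$, i.e.\ the response of $r_a$ precedes the invocation of $w_{a+2}$; and as $w_{a+2}$ is relevant for $r_{a+1}$, its invocation precedes the response of $r_{a+1}$. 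Therefore the response of $r_a$ precedes that of $r_{a+1}$, and chasing this around the cycle produces a strictly increasing cyclic chain of read-responses, which is impossible. This establishes acyclicity, and hence the existence of $\sersym$.

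Finally I would verify the return values. Writing $w^{*}_{r}$ for the $\sersym[r]$-maximum of $\relwrites{\sigma,r}$, legality of $\sersym[r]$ says that $r$ returns the value written by $w^{*}_{r}$; and since $\sersym$ refines $\sersym[r]$ on the relevant writes, $w^{*}_{r}$ is also the $\sersym$-maximum of $\relwrites{\sigma,r}$. If $w^{*}_{r}$ overlaps $r$, its value is recorded in $\posval[i]{s}$ as the value of an overlapping write, so $r$ may return it. If instead $w^{*}_{r}$ is fixed for $r$, I would observe that every write strictly $\sersym$-above $w^{*}_{r}$ is not relevant for $r$, and so starts only after $r$ has finished; consequently no such write has its order action before the invocation of $r$, all writes whose order action precedes that invocation are $\sersym$-below $w^{*}_{r}$, and $w^{*}_{r}$ is precisely the write of the most recent $\orderwrite[i]$ before the invocation of $r$. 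Hence $r$ again correctly returns the value written by $w^{*}_{r}$, so $\alpha$ is a trace of $R_r$. The main obstacle is clearly the acyclicity of $\sqsubset$: this is the only step that uses the full strength of the write-order condition, and getting the interaction between $<_{\sigma}$-edges and $\sersym[r]$-edges right in the minimal-cycle argument is the delicate point.
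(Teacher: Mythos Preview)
Your construction of a single global write order $\sersym$ is correct and is essentially the paper's own approach in different clothing: the paper takes $\sersym=\bigcup_{r}(\sersym[r]\cap(W\times W))$ and proves directly that this is already a strict total order on $W=\bigcup_r\relwrites{\sigma,r}$ (the transitivity argument there is morally the same as your minimal-cycle argument), then extends it to the remaining writes; you instead close $<_{\sigma}\cup\bigcup_r\sersym[r]$ under transitivity, prove acyclicity, and linearly extend. Either way one ends up with the same object, and your cycle-elimination argument is sound.

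The gap is in the last paragraph, and it is a real one. You take $w^{*}_{r}$ to be the $\sersym[r]$-maximum of $\relwrites{\sigma,r}$ and assert that legality makes $r$ return the value of $w^{*}_{r}$. That matches the paper's formal wording for ``legal'', but the notion the paper actually uses in its proof (and the standard one) is that $r$ returns the value of its \emph{direct $\sersym[r]$-predecessor} $\rho(r)$; these coincide only when $r$ sits last in $\sersym[r]$, which is not forced. When some overlapping write $w$ is serialised $\sersym[r]$-after $r$, one has $\rho(r)\ser w$ in your global order while $w$ is still relevant for $r$, so your key claim ``every write strictly $\sersym$-above $w^{*}_{r}$ is not relevant for $r$'' fails once $w^{*}_{r}$ is taken to be $\rho(r)$. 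Concretely, with $\sigma=sw_1(1)\,sw_2(2)\,fw_1\,sr_0\,fw_2\,fr_0(1)$ and $\sersym[r]=w_{\mathit{init}},w_1,r,w_2$, your $\sersym$ is $w_{\mathit{init}},w_1,w_2$, and the placement $sw_1\,ow_1\,sw_2\,ow_2\,fw_1\,sr_0\,fw_2\,fr_0(1)$ obeys all your stated constraints yet leaves $\posval[0]{s}=\{2\}$, so $fr_0(1)$ is not enabled.

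What is missing is exactly what the paper adds: the placement of order actions must be \emph{read-aware}. The paper achieves this by interleaving the non-overlapping reads into the enumeration of writes (its ordering lemma), so that when it walks the schedule it advances past the invocation of each such $r$ before inserting the order action of the next write; this guarantees that $\rho(r)$'s order action is the most recent one before $\startread[i]$ for every non-overlapping read $r$. Your placement rule (``in $\sersym$-order, somewhere between invocation and response'') is too coarse to guarantee this; you would need to additionally force, for every non-overlapping $r$ and every $w$ with $\rho(r)\ser w$ that overlaps $r$, that $ow(w)$ lands after $\startread$.
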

We give a brief, informal description of how such a trace $\alpha$ can be constructed here; a more formal argument is presented in \autoref{app:proofMWRegWO}. The idea is that order actions can be inserted between the invocation and response of every write in $\sigma$, such that the return values of the reads match this placement of order actions.
Note that for reads that return the value of an overlapping write, this return value is possible according to \autoref{tab:procregular} regardless of how the order actions are placed. 
In our placement of order actions, we therefore only need to carefully consider reads that return the value of a write that is fixed for them.
According to \autoref{def:shao-write-order}, reads in $\sigma$ agree on the relative ordering of all writes that are relevant to them. Since $\fixedwrites{\sigma, r} \subseteq \relwrites{\sigma, r}$ for every read $r$, the reads also agree on the relative ordering of the fixed writes. 
We use this information to construct an ordering on all writes that is consistent both with $<_{\sigma}$ and with the return values of reads that read from writes that are fixed for them. 
Effectively, we find a single view on the relative order of all the write operations that is possible for every read in the schedule that returns the value of a fixed write.
Using this ordering, we can then place the order actions in the schedule $\sigma$ to create the trace $\alpha \in \RTraces$ such that $\bar{\alpha} = \sigma$.

Whilst every schedule satisfying \autoref{def:shao-write-order} corresponds to a trace of our model, not every schedule with a corresponding trace in our model is allowed by the write-order condition.
\begin{example}
Consider \autoref{fig:weakordercounter}.
This schedule is allowed by our model; $r_1$ can read 2 in $x$ because it overlaps with $w_2$ and it is possible for $r_2$ to read 1 if the order action of $w_2$ is done before the order action of $w_1$.
This schedule does not meet \autoref{def:shao-write-order} however; since both writes to $x$ are relevant for both reads, the two reads must agree on the respective order of the writes. For $r_2$ to read 1, it must be the case that $w_2 \ser_{r_2} w_1$. But since $w_1 < r_1$ according to the schedule, this means that $w_2 \ser_{r_1} w_1 \ser_{r_1} r_1$, so $r_1$ cannot read 2.
\end{example}

The third and fourth conditions on schedules proposed in \cite{Shao11} we refer to as \textit{reads-from} \cite[Definition 9]{Shao11} and \textit{no-inversion} \cite[Definition 10]{Shao11}, respectively. We do not recall these conditions here, and instead refer to \cite{Shao11} for more details.

Our notion of MWMR regular register is incomparable with the notions induced by the reads-from and no-inversion conditions on schedules.
First, as already indicated, every schedule that satisfies the write-order condition is also allowed by our model.
As it is proven in \cite{Shao11} that the write-order condition is incomparable with the reads-from and no-inversion conditions, this means our model admits schedules not admitted by these definitions.
To see that that not all schedules satisfying reads-from and no-inversion are admitted by our model, it suffices to observe that the schedule presented in \autoref{fig:weakcounter}, which is not admitted by our MWMR regular register model, satisfies  the reads-from and the no-inversion conditions. (See, e.g., \cite[Figure 8]{Shao11} and \cite[Figure 9]{Shao11}, which satisfy the reads-from and no-inversion conditions, respectively, and have the schedule in \autoref{fig:weakcounter} as prefix.)

\section{Verifying Mutual Exclusion Protocols}\label{sec:verifmutex}

\theoremstyle{remark}
\newtheorem{property}{Property}

We have used the register processes described in \autoref{sec:MWMRdefs} to analyse several well-known mutual exclusion algorithms. To this end, we have modelled the behaviour of the threads as prescribed by the algorithm also as processes, which interact with the register processes. That a thread is executing its non-critical section is represented in our model by the action $\mathit{noncrit}$, and that is executing its critical section is represented by the action $\mathit{crit}$; both actions are parameterised with the thread id.
We have checked the following two properties.
\begin{property}[Mutex]\label{prop:mutex}
    There is no state reachable from the initial state of the model in which there are two distinct threads $i$ and $j$ such that $\mathit{crit}(i)$ and $\mathit{crit}(j)$ are both enabled in this state.
\end{property}

\begin{property}[Reach]\label{prop:reach}
    For all threads $i$, always after an occurrence of a $\mathit{noncrit}(i)$ action it holds that, as long as a $\mathit{crit}(i)$ action has not happened, a state is reachable in which $\mathit{crit}(i)$ is enabled.
\end{property}

The Reach property is implied by starvation freedom, and so if it does not hold, then neither does starvation freedom.
We chose to analyse this property rather than starvation freedom itself because the presence of busy waiting loops in our models would require us to use fairness assumptions to dismiss spurious counterexamples.
The question of how to interpret fairness assumptions when dealing with non-atomic registers is outside of the scope of this paper.

The results of our verification are shown in \autoref{tab:verifres}.
When doing model checking, we have to instantiate a specific number of threads.
We have restricted our verification to three threads for all algorithms, except for Dekker, Attiya-Welch and Peterson, which are only defined for two threads. 

In this section, we discuss some of our most interesting findings.
For complete descriptions of counterexamples, as well as further discussion of our results we refer to \autoref{app:mutex}. All models are available through GitHub\footnote{\url{https://github.com/mCRL2org/mCRL2/tree/master/examples/academic/non-atomic_registers} (\texttt{972629b})}.

\begin{table}[t]
    \centering
\begin{tabular}{@{}l|ll|ll|ll@{}}
\toprule
 & \multicolumn{2}{c|}{Safe} & \multicolumn{2}{c|}{Regular} & \multicolumn{2}{c}{Atomic} \\
 & Mutex & Reach & Mutex & Reach & Mutex & Reach \\ \midrule
Aravind (BLRU) \cite[Figure 4]{aravind2010yet}  &  \checkmark & \checkmark & \checkmark & \checkmark  & \checkmark & \checkmark \\
Attiya-Welch \cite[Algorithm 12]{AttiyaWelch04}  & \checkmark & \checkmark & \checkmark & \checkmark & \checkmark & \checkmark \\
Attiya-Welch alternate \cite[Figure 19.1]{Shao11} & \checkmark & \texttimes & \checkmark & \texttimes & \checkmark & \checkmark\\
Dekker \cite[Figure 1]{alagarsamy2003some}& \checkmark & \checkmark & \checkmark & \checkmark & \checkmark & \checkmark \\
Dijkstra \cite{dijkstra65} & \checkmark  &  \checkmark & \checkmark & \checkmark & \checkmark & \checkmark  \\
Knuth \cite{knuth1966additional}& \checkmark  & \checkmark & \checkmark  & \checkmark & \checkmark & \checkmark \\
Lamport (3-bit) \cite[Figure 2]{Lamport86Mutex2}& \checkmark  & \checkmark  & \checkmark  & \checkmark& \checkmark  & \checkmark \\
Peterson \cite{Peterson81} & \texttimes & \checkmark & \texttimes & \checkmark & \checkmark & \checkmark \\
Szymanski (flag) \cite[Figure 2]{Szy88}& \texttimes & \texttimes& \texttimes & \checkmark & \checkmark  & \checkmark\\
Szymanski (flag with bits) & \texttimes & \checkmark& \texttimes & \checkmark & \texttimes & \checkmark\\
Szymanski (3-bit lin.\ wait) \cite[Figure 1]{Szy90}& \texttimes & \checkmark & \texttimes  & \checkmark & \texttimes & \checkmark \\ \bottomrule
\end{tabular}
    \caption{Results of verifying mutual exclusion algorithms.  }
    \label{tab:verifres}
\end{table}

\subsection{Peterson's Algorithm}\label{subsec:peterson-mutex}
Peterson's classic algorithm (see \autoref{alg:peterson}) was not designed to be correct under non-atomic register assumptions. 
An analysis of the mutual exclusion violation with safe registers still gives interesting insights into the algorithm and some of the unexpected behaviour of safe registers.

As expected, mCRL2 reports that mutual exclusion does not hold when using non-atomic registers.
We present a visualisation of the counterexample generated by mCRL2 for safe registers in \autoref{fig:peterson_counter_mutex_safe_timeline}.
There are two instances of overlapping operations.
First, since the two writes to $\mathit{turn}$, labelled $w_3$ and $w_4$ in \autoref{fig:peterson_counter_mutex_safe_timeline}, overlap, according to the safe register model the register can have any arbitrary value after they both have ended. 
In this counterexample, $\mathit{turn}$ has the value 1, which allows thread 0 to read the value 1 (the read labelled $r_4$) and enter the critical section.
Second, thread 1's read of $\mathit{turn}$ (labelled $r_2$) overlaps with thread 0's write (labelled $w_4$). 
The read can therefore return an arbitrary value, in this case the value 0, which allows thread 1 to enter the critical section.

\begin{figure}[b]
    \centering
    \includegraphics[width=0.8\textwidth]{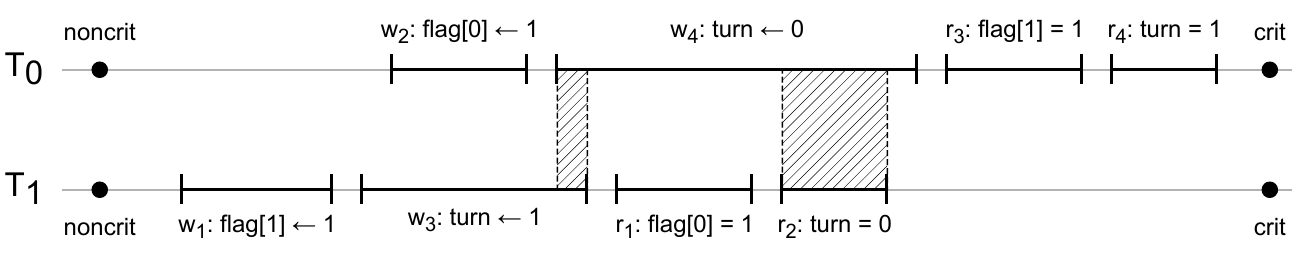}
    \caption{Counterexample generated by mCRL2 for mutual exclusion for Peterson's algorithm with safe registers, represented on a timeline}
    \label{fig:peterson_counter_mutex_safe_timeline}
\end{figure}

This counterexample shows only overlaps on the $\mathit{turn}$ register. 
We can initialise our model such that the $\mathit{turn}$ register is atomic, but both $\mathit{flag}$ registers behave as safe registers.
We find that mutual exclusion does hold then.
This confirms that overlapping operations on the $\mathit{turn}$ register are the sole cause of the mutual exclusion violation for Peterson's algorithm.
We discuss Peterson's algorithm with regular registers in \autoref{app:mutex}.

\noindent\begin{minipage}[t]{0.45\textwidth}
\begin{algorithm}[H]
\caption{Peterson's algorithm for two threads from \cite{Peterson81}. We use $i$ for the thread's own id and $j$ for the other thread's id.}\label{alg:peterson}
\begin{algorithmic}[1]
    \State{$\mathit{flag}[i]\gets 1$}
    \State{$\mathit{turn}\gets i$}
    \State{\textbf{await} $\mathit{flag}[j] = 0 \vee turn = j$}
    \State{\textbf{critical section}}
    \State{$\mathit{flag}[i]\gets 0$}
\end{algorithmic}
\end{algorithm}
\end{minipage}
\hfill
\begin{minipage}[t]{0.47\textwidth}
 \begin{algorithm}[H]
\caption{Szymanski's flag algorithm from \cite{Szy88}, $i$ is the thread's own id.}\label{alg:Szy-flag}
\begin{algorithmic}[1]
  \State{$\mathit{flag}[i]\gets 1$}
  \State{\textbf{await} $\forall j.\ \mathit{flag}[j]<3$}
  \State{$\mathit{flag}[i]\gets 3$}
  \If{$\exists j.\ \mathit{flag}[j]=1$}
    \State{$\mathit{flag}[i]\gets 2$}
    \State{\textbf{await} $\exists j.\ \mathit{flag}[j]=4$}
  \EndIf
  \State{$\mathit{flag}[i]\gets 4$}
  \State{\textbf{await} $\forall j < i.\ \mathit{flag}[j]<2$}
  \State{\textbf{critical section}}
  \State{\textbf{await} $\forall j>i.\ \mathit{flag}[j]<2 \vee
    \mathit{flag}[j]>3$}\label{alg:Szy-flag:exit}
  \State{$\mathit{flag}[i]\gets 0$}
\end{algorithmic}
\label{alg:szymanski-flag}
\end{algorithm}   
\end{minipage}

\subsection{Szymanski's Flag Algorithm}\label{subsec:szymanski-flag-mutex}
There are several variants of Szymanski's algorithm, which all seem to have been derived from the flag-based algorithm shown as \autoref{alg:szymanski-flag}. In \cite{Szy88}, Szymanski proposes this flag-based algorithm and claims that an implementation of it representing the flags using three bits is robust for flickering of bits (i.e., is correct for non-atomic registers).
As indicated in \autoref{tab:verifres}, we find that neither the integer nor the bits variant ensure mutual exclusion when using non-atomic registers. 
The full analysis of the bits version, as well as a variant of it known as the 3-bit linear wait algorithm \cite{Szy90} are presented in \autoref{app:mutex}.
Here, we only discuss the integer version of the flag algorithm, as the counterexample against Mutex that we have found illustrates the core issue shared by all mentioned variants of Szymanski's algorithm.

The pseudocode for the flag algorithm is shown in \autoref{alg:Szy-flag}. It is originally presented in \cite[Figure 2]{Szy88}, but note that
we have repaired an obvious typo: \cite[Figure 2]{Szy88} erroneously
has a conjunction instead of a disjunction in line~\ref{alg:Szy-flag:exit}.
All $\mathit{flag}$ registers are initialised at 0.

See \autoref{fig:szymanski_counter_mutex_regular_timeline} for a visualisation of the counterexample for mutual exclusion with two threads and regular registers that we found using the mCRL2 toolset. 
The first instance of a read overlapping with a write is irrelevant, reading $\mathit{flag}[1] = 1$ would also have been possible without overlap. 
The other two instances of overlap are of interest. 
Thread 0 is writing the value 3 to $\mathit{flag}[0]$ and thread 1 reads $\mathit{flag}[0]$ twice while this write is active.
The first time it reads the new value (3), while the second time it reads the old value (1). 
Lamport specifically highlights that such a sequence is possible when using regular registers \cite{Lamport86IPCalg}.
Since only single-writer registers are used and write-order reduces to Lamport's definition of regular registers when single-writers in that case \cite{Shao11}, this counterexample is also valid for write-order.

\begin{figure}[ht]
    \centering
    \includegraphics[width=\textwidth]{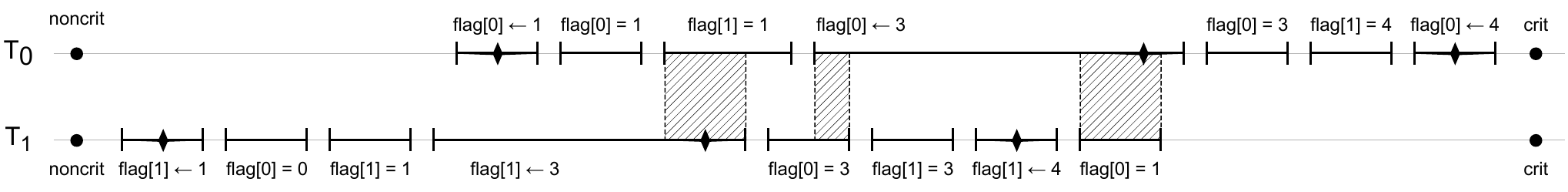}
    \caption{Mutex violation for Szymanski (flag) with regular registers and two threads, generated by mCRL2, on a timeline. The order-actions are drawn with lines during a write's execution.}
    \label{fig:szymanski_counter_mutex_regular_timeline}
\end{figure}

\subsection{Implementation Details}\label{subsec:implementation-mutex}
Our analyses have also revealed that seemingly minor implementation subtleties can make the difference between a correct and an incorrect algorithm.
A non-atomic register that is read multiple times in a row may return different values, even if no new writes to this register have started.
This means that when the value of a register needs to be checked several times in an algorithm, there is a difference between reading it once and subsequently checking a local copy of the value, or reading it again when needed.

For an example where this affects correctness, consider the Attiya-Welch algorithm. While the presentation in \cite[p.\ 77]{AttiyaWelch04} ensures reachability of the critical section with safe registers, the seemingly equivalent reformulation of this same algorithm in \cite{Shao11} does not.
The latter suggests that a thread needs to read a particular register twice as part of two different conditions that in the former are handled simultaneously.
In \cite{Shao11}, that presentation of the algorithm is claimed to be correct under all four of their MWMR regular register models; our counterexample shows that it is not.
A similar phenomenon occurs with Lamport's 3-bit algorithm, in which each thread $i$ has a bit $z_i$. As part of the algorithm, a computation is done on $z$ (the function assigning $z_i$ to $i$).
Lamport states that ``evaluating [$z$] at $j$ requires a read of the variable $z_j$.''
This may lead one to implement this algorithm by having threads re-read variables whenever needed. It turns out this implementation leads to a deadlock.
Locally saving all required $z$-values at the start of the computation and then only referencing this local copy during the computation solves this issue.
Consequently, these algorithms have a correct implementation, but they are also easily implemented incorrectly.
See the discussions of Attiya-Welch and Lamport in \autoref{app:mutex} for more details.

\subsection{Other Verifications}
There have been many mechanical verifications of mutual exclusion algorithms with atomic registers. For instance, in recent tutorials on the verification of distributed algorithms in mCRL2, verifications of Dekker's and Peterson's algorithms are presented \cite{groote2021tutorial,GKLVW19}. 
Several such verifications have also been done with the CADP toolset; see, e.g., \cite{mateescu2010study} for the results of verifying a large number of mutual exclusion algorithms, including Szymanski, Dekker and Peterson, with atomic registers.

To the best of our knowledge, we are the first to propose a systematic approach to mechanically verifying the correctness of mutual exclusion algorithms with respect to non-atomic registers, but there have been some mechanical verifications for specific algorithms.

Lamport himself modelled the Bakery algorithm in TLA+, representing the non-atomic writes as sequences of write actions of arbitrary length, where every action results in an arbitrary value being written, except for the last which writes the intended value \cite{lamportHyperbook}. This approach for modelling safe registers only works for SWMR registers; it does not work for MWMR registers. 
This approach for modelling safe SWMR registers, as well as a similar approach for modelling regular SWMR registers, is presented in \cite{anderson1988atomic}.
This approach is also used in several verifications done by Wim Hesselink, including of the Lycklama–Hadzilacos–Aravind algorithm in \cite{hesselink2015mutual} and the Bakery algorithm in \cite{hesselink2013mechanical}.

In \cite{cicirelli2016modelling}, several mutual exclusion algorithms are verified with atomic registers using timed automata in UPPAAL. Additionally, the Block $\&$ Woo algorithm is checked with bit flickering. Their model does not account for writes that overlap with other writes. Additionally, their model for the behaviour of safe registers is specific to the registers used in the algorithm.

Dekker's algorithm with safe registers is considered in \cite{buhr2016dekker}.
There it is demonstrated that Dekker's algorithm does not satisfy starvation freedom when safe registers are used, and a fixed version of the algorithm is presented.

Szymanski's flag algorithm with atomic registers is proven correct in \cite{manna1990exercise}. This paper demonstrates the importance of checking all threads in the ``forall'' and ``exists'' statements in the pseudocode in the same order every time. This is also how we model the algorithm.

There have been other verifications of Szymanski's algorithms \cite{manna1994step,szymanski1994automatic}, the former paper using the STeP tool. However, the exact pseudocode in those proofs differs from the pseudocode in \cite{Szy88} and \cite{Szy90}.

\section{Conclusions}\label{sec:conclusions}

We have presented process-algebraic models of safe, regular and atomic multi-writer multi-reader registers and used them to determine the robustness of various mutual exclusion algorithms for relaxed atomicity assumptions. Our analyses revealed issues with several of the algorithms discussed.

There are many more mutual exclusion algorithms that could be analysed in the same way as the ones shown in \autoref{sec:verifmutex}.
In \cite{Szy90}, Szymanski presents three other mutual exclusion algorithms. 
There also exist several variants of Szymanski's algorithm \cite{manna1994step,szymanski1994automatic}, all of which are similar to the 3-bit linear wait algorithm but differ in small ways.
In \cite{buhr2016dekker} it is shown that Dekker's algorithm does not ensure starvation freedom when safe registers are used and a modified version of the algorithm is presented which does satisfy this property. When we add verification of starvation freedom to our analysis, we can confirm their work.

We have only considered to what extent various algorithms guarantee mutual exclusion and whether the critical section is always reachable for every thread. Our next step will be to consider starvation freedom. Van Glabbeek proves that starvation freedom cannot hold for any mutual exclusion algorithm for which the correctness, on the one hand, relies on atomicity of memory interactions and, on the other hand, does not rely on assumptions regarding the relative speeds of threads \cite{glabbeek2021modelling}. A crucial presupposition for his argument is that a convincing verification hinges on not more than a component-based fairness assumption called justness. In \cite{BLW20} a method is proposed for verifying liveness properties under justness assumptions using the mCRL2 toolset. The method requires a classification of the roles of components in interactions. It should be investigated how to classify the roles of threads and registers in invocations and responses, and, in particular, how to deal with the \orderwrite{}, \executewrite{} and \executeread{} actions in the method.

  \bibliography{main}

\newpage
\appendix

  \section{Proof of \autoref{prop:weakorderreg}}\label{app:proofMWRegWO}

  This section presents our proof that every schedule $\sigma$ satisfying the write-order condition can be simulated by our model $R_r$.

  The return value of a read operation $r\in\ops{\sigma}$ is either the initial value of the register or the write value associate with some write operation $w\in\ops{\sigma}$ such that $r\not<_{\sigma}w$. Let $\sigma$ be a schedule that satisfies the write-order regular register condition; the \emph{reads-from mapping} $\rho$ for $\sigma$ is the mapping
  \begin{equation*}
      \rho: \reads{\sigma}\rightarrow \writes{\sigma}
  \end{equation*}
  that associates with every $r \in \reads{\sigma}$ its direct predecessor in $\sersym[r]$ (recall that we have included a special write operation $w_{\mathit{init}}$ in $\ops{\sigma}$ that precedes all other operations, so that every $r\in\reads{\sigma}$ indeed has a direct predecessor in $\sersym[r]$).
  
  \begin{proposition} \label{prop:readsfrom}
    Let $\sigma$ be a schedule satisfying the write-order regular register condition and let $\rho$ be the associated reads-from mapping. Then
      \begin{itemize}
      \item $\rho(r)\in\relwrites{r}$,
      \item there does not exist a write $w\in\writes{\sigma}$ such that $\rho(r)<_{\sigma}w <_{\sigma} r$, and
      \item the write value of $\rho(r)$ equals the return value of $r$.
  \end{itemize}
  \end{proposition}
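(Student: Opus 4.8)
The plan is to fix a read $r\in\reads{\sigma}$ and to read off the three claims from the definition of $\rho(r)$ as the direct $\sersym[r]$-predecessor of $r$, together with the two defining properties of the $r$-serialisation $\sersym[r]$: its consistency with $<_{\sigma}$ and its legality. Note that the write-order \emph{agreement} between distinct reads plays no role here; it suffices that each $\sersym[r]$ is a legal serialisation of $\relwrites{\sigma,r}\cup\{r\}$ consistent with $<_{\sigma}$, in which $r$ is the only read. The first claim is then immediate: $\rho(r)$ lies in $\relwrites{\sigma,r}\cup\{r\}$ and is distinct from $r$, so $\rho(r)\in\relwrites{\sigma,r}$. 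For the second claim I would argue by contradiction. If a write $w$ satisfied $\rho(r)<_{\sigma}w<_{\sigma}r$, then $w<_{\sigma}r$ makes $w$ fixed, so $w\in\fixedwrites{\sigma,r}\subseteq\relwrites{\sigma,r}$ and $w$ occurs in $\sersym[r]$; consistency turns $\rho(r)<_{\sigma}w<_{\sigma}r$ into $\rho(r)\ser[r]w\ser[r]r$, placing $w$ strictly between $\rho(r)$ and $r$, which contradicts $\rho(r)$ being the \emph{direct} predecessor.

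The third claim is the crux. Legality gives that the return value of $r$ is the write value of the $\sersym[r]$-maximum of $\relwrites{\sigma,r}$, so it remains to identify this maximum with $\rho(r)$, the direct predecessor of $r$. These coincide precisely when $r$ is the $\sersym[r]$-maximum of the whole serialisation, and the obstacle is that an overlapping relevant write could a priori be ordered after $r$. I would dispose of this by observing that the serialisations may be chosen with $r$ maximal: since $r\not<_{\sigma}w$ for every $w\in\relwrites{\sigma,r}$, shifting $r$ above all relevant writes preserves consistency with $<_{\sigma}$, leaves the order among the writes (hence both the maximal relevant write and any write-order agreement) unchanged, and preserves legality, because the return value is fixed by $\sigma$ and the maximal relevant write is unaffected. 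With $r$ maximal, its direct predecessor is exactly the $\sersym[r]$-maximum of $\relwrites{\sigma,r}$, and legality then yields that the write value of $\rho(r)$ equals the return value of $r$.

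The main obstacle is therefore entirely contained in the third claim: reconciling the legality condition, which is phrased in terms of the maximal relevant write, with the definition of $\rho(r)$ as an immediate predecessor. The first two claims are direct consequences of consistency and of the meaning of ``direct predecessor''. I would also record the two facts that keep the argument well-posed: $\fixedwrites{\sigma,r}\subseteq\relwrites{\sigma,r}$, used in the second claim, and the non-emptiness of $\relwrites{\sigma,r}$ (it contains $w_{\mathit{init}}$), which guarantees that the direct predecessor $\rho(r)$ exists in the first place.
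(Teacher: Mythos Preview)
The paper gives no explicit proof of this proposition; it is stated immediately after the definition of $\rho$ and treated as a direct consequence of the definitions. Your argument is correct and in fact more careful than what the paper offers.

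For the first two items your reasoning (membership in $\relwrites{\sigma,r}\cup\{r\}$ minus $\{r\}$, and consistency of $\sersym[r]$ with $<_{\sigma}$ together with the meaning of ``direct predecessor'') is exactly what one would expect. For the third item you correctly spot a subtlety: under the paper's literal formalisation of legality (the return value of $r$ is the write value of the $\sersym[r]$-maximum of $\relwrites{\sigma,r}$), this maximum coincides with the direct $\sersym[r]$-predecessor of $r$ only when $r$ itself is $\sersym[r]$-maximal, which is not forced by consistency since an overlapping relevant write may be placed after $r$. Your normalisation---pushing $r$ to the top of $\sersym[r]$---is sound: it preserves consistency (because $r\not<_{\sigma}w$ for every $w\in\relwrites{\sigma,r}$), it leaves the relative order of the writes untouched and hence preserves both the write-order agreement condition and the $\sersym[r]$-maximum of $\relwrites{\sigma,r}$, and therefore it preserves legality. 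With that choice of $\sersym[r]$, all three claims follow.

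The paper presumably has in mind the informal reading of legality (``the most recent write \emph{preceding} $r$ in $\sersym[r]$''), under which the third claim is immediate without any normalisation; your treatment simply closes the gap between that informal phrase and the paper's formal clause.
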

    
  Our goal is now to show that every schedule satisfying the write-order regular register condition can be transformed into a trace of our regular register model by appropriately inserting $\orderwrite[i]$ actions.

  \begin{lemma} \label{lem:readsfromfinite}
     Let $w\in\writes{\sigma}$. If the set $\rho^{-1}(w)=\{r\in\reads{\sigma}\mid \rho(r)=w\}$ is infinite, then it has an infinite subset $R\subseteq\rho^{-1}(w)$ such that for all $r\in R$ and for all $w'\in \writes{\sigma}$ we have that $w' <_{\sigma} r$.
  \end{lemma}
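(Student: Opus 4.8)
The plan is to reason by contraposition on the structure of $\writes{\sigma}$: I will show that the hypothesis that $\rho^{-1}(w)$ is infinite already forces $w$ to be a $<_{\sigma}$-latest write and $\writes{\sigma}$ to be finite, after which the set $R$ is extracted by a straightforward counting argument.

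The key structural claim to establish first is: if $\rho^{-1}(w)$ is infinite, then there is no write $w'\in\writes{\sigma}$ with $w<_{\sigma}w'$. Suppose such a $w'$ existed. By \autoref{prop:readsfrom}, for every $r\in\rho^{-1}(w)$ there is no write $w''$ with $w<_{\sigma}w''<_{\sigma}r$; instantiating $w''$ with $w'$ rules out $w'<_{\sigma}r$, so the invocation of $r$ must precede the response of $w'$ in $\sigma$. As the response of $w'$ occupies a fixed finite position, only finitely many read invocations can precede it, so $\rho^{-1}(w)$ would be finite, contradicting the hypothesis.

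I would then convert this into finiteness of $\writes{\sigma}$ and extract $R$. Having ruled out every write with $w<_{\sigma}w'$, each write $w''\in\writes{\sigma}$ satisfies $w\not<_{\sigma}w''$, which directly means that the invocation of $w''$ precedes the response of $w$ in $\sigma$. Since only finitely many actions precede the response of $w$, there are only finitely many write invocations, hence $\writes{\sigma}$ is finite. Now let $p^{*}$ be the maximum position in $\sigma$ of the response of any write (the special write $w_{\mathit{init}}$ needs no separate treatment, as it $<_{\sigma}$-precedes everything by stipulation). Any read whose invocation occurs strictly after position $p^{*}$ then has $w'<_{\sigma}r$ for every $w'\in\writes{\sigma}$. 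As only finitely many operations occur up to position $p^{*}$ whereas $\rho^{-1}(w)$ is infinite, infinitely many reads of $\rho^{-1}(w)$ are invoked after $p^{*}$, and this infinite set is the required $R$.

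I expect the main obstacle to be the first step: recognising that a single later write $w'$ already bounds $\rho^{-1}(w)$ through the no-intermediate-write property of \autoref{prop:readsfrom}. Everything after that is routine counting once finiteness of $\writes{\sigma}$ is secured, the only care being the standard bookkeeping distinction between the position of an invocation and the position of a response when unfolding $<_{\sigma}$.
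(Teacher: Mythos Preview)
Your proposal is correct and follows essentially the same approach as the paper: both use the no-intermediate-write clause of \autoref{prop:readsfrom} to rule out any write $w'$ with $w<_{\sigma}w'$, then bound the remaining writes by the finite prefix up to the response of $w$, and finally extract $R$ by counting. The only difference is order of presentation: you establish the absence of later writes first and then observe that $\writes{\sigma}$ itself is finite, whereas the paper first isolates the finite set $W=\{w'\mid w\not<_{\sigma}w'\}$, builds $R$ relative to $W$, and only afterwards shows $W=\writes{\sigma}$; your ordering is marginally cleaner but the substance is identical.
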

  \begin{proof}
     Let $W=\{w'\in\writes{\sigma}\mid w\not<_\sigma w'\}$. Since the invocations of all $w'\in W$ must appear in $\sigma$ before the response of $w$, we have that $W$ is finite. At most finitely many reads can have their invocations appear before the last occurrence of a response of a write in $W$ and so for infinitely many reads $r\in \rho^{-1}(w)$ we have that $w' <_{\sigma} r$ for all $w'\in W$. Let $R$ be the set of all those reads, i.e.,
     \begin{equation*}
         R = \{ r\in \rho^{-1}(w) \mid \forall w'\in W. w'<_{\sigma} r\}\enskip.
     \end{equation*}
     It remains to argue that there cannot exist $w'\in\writes{\sigma}$ such that $w<_{\sigma}w'$. To this end, we derive a contradiction from the assumption that there does exist such $w'\in\writes{\sigma}$. Since only finitely many reads can have their invocations before the occurrence of the response of $w'$ in $\sigma$ (for the prefix of $\sigma$ including the response of $w'$ is finite), it follows that there exists $r\in R$ such that $w' <_{\sigma} r$. But then we have that $\rho(r) <_{\sigma} w' < r$, which contradicts the statements in Proposition~\ref{prop:readsfrom}.
  \end{proof}

  \begin{definition}
    We say that a read $r\in\reads{\sigma}$ is \emph{non-overlapping} if $\rho(r)<_{\sigma} r$. We denote the set of all non-overlapping reads in $\sigma$ by $\readsno{\sigma}$.
  \end{definition}
  
  \begin{lemma} \label{lem:ordering}
      Let $\sigma$ be a schedule that satisfies the write-order condition.
      Then there exists an enumeration $\vec{o}=o_0,o_1,o_2,\dots$ of $\readsno{\sigma}\cup\writes{\sigma}$ satisfying the following properties:
      \begin{enumerate}
          \item $o_0=w_{\mathit{init}}$;
          \item if $o_i<_{\sigma} o_j$, then $i<j$ for all relevant $i,j$;
          \item for every $r\in\readsno{\sigma}$ we have that
            $\rho(r)$ appears before $r$ in $\vec{o}$ and between $\rho(r)$ and $r$ there is no other write; and
          \item for all reads $r,r'\in\readsno{\sigma}$, if $r$ and $r'$ are distinct, $\rho(r)=\rho(r')$ and $r$ appears before $r'$ in $\vec{o}$, then the invocation of $r$ occurs before the invocation of $r'$ in $\sigma$.
      \end{enumerate}
  \end{lemma}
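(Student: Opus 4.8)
The plan is to construct the enumeration in two layers. First I would fix a strict total order $\sqsubset$ on the writes $\writes{\sigma}$ that refines $<_{\sigma}$ and has $w_{\mathit{init}}$ as its minimum (which it must, since $w_{\mathit{init}} <_{\sigma} w$ for every other write). Then I would obtain $\vec{o}$ by listing the writes in $\sqsubset$-order and slotting each non-overlapping read $r\in\readsno{\sigma}$ into the gap immediately following the write $\rho(r)$ it reads from, breaking ties between reads sharing the same source by the order of their invocations in $\sigma$. With this shape, conditions~(1), (3) and~(4) are essentially built in, and the whole argument reduces to choosing $\sqsubset$ so that condition~(2), consistency with $<_{\sigma}$, also holds.

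I would first isolate precisely which constraints on $\sqsubset$ condition~(2) imposes. A case analysis on the four kinds of ordered pairs shows that, given the gap construction, the only non-automatic requirement is the following: for every $r\in\readsno{\sigma}$ and every write $w$ with $w <_{\sigma} r$ (i.e.\ $w\in\fixedwrites{\sigma,r}$), one needs $w \sqsubseteq \rho(r)$. The remaining cases follow for free from $\rho(r)<_{\sigma} r$ and from the tie-breaking: for instance, if $r<_{\sigma}r'$ then $\rho(r)<_{\sigma}r'$, so $\rho(r)$ is fixed for $r'$ and the same requirement places $r$'s gap no later than $r'$'s. Using Proposition~\ref{prop:readsfrom}, which identifies $\rho(r)$ as the $\sersym[r]$-greatest relevant write, the requirement amounts to asking that $\rho(r)$ be $\sqsubset$-maximal among the writes fixed for $r$. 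Thus everything hinges on the existence of a linear order $\sqsubset$ refining $<_{\sigma}$ in which, for each non-overlapping read, its reads-from write dominates all of its other fixed writes.

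The main obstacle is proving that these demands are jointly satisfiable, i.e.\ that the relation generated by $<_{\sigma}$ together with the edges $w\sqsubset\rho(r)$ (for $w$ fixed for $r$, $w\neq\rho(r)$) is acyclic. Two-cycles are ruled out directly: an edge $w\sqsubset\rho(r)$ gives $w\ser[r]\rho(r)$ with both writes relevant for $r$, so a reverse edge would exhibit two reads disagreeing on a commonly relevant pair, contradicting write-order (\autoref{def:shao-write-order}); and each such edge is compatible with $<_{\sigma}$, since no write can lie strictly $<_{\sigma}$-between $\rho(r)$ and $r$. For longer cycles I expect the decisive point to be that evading write-order forces $<_{\sigma}$ to close up: if a cycle threads through reads $r_0,r_1,\dots$ whose witnessing writes are pairwise not all commonly relevant, then the very relevance exclusions needed to avoid a write-order conflict, combined with the fixedness relations $w_i <_{\sigma} r_i$, chain into a $<_{\sigma}$-cycle, which is impossible. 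Turning this mechanism into a clean induction on cycle length, and verifying it covers every way edges of the two kinds can alternate, is the part I anticipate to be most delicate; once acyclicity is in hand, $\sqsubset$ is any linear extension.

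Finally I would address the bookkeeping needed when $\sigma$ is infinite, so that $\vec{o}$ is a genuine $\omega$-enumeration in which every operation receives a finite index. The danger is a write $w$ with infinitely many reads in its gap, which would push the next $\sqsubset$-write to a transfinite position; this is exactly where \autoref{lem:readsfromfinite} is used. Whenever $\rho^{-1}(w)$ is infinite it guarantees that $w$ is $<_{\sigma}$-maximal among writes and that cofinally many of its reads follow all writes, so $w$ may be taken $\sqsubset$-last and its gap may safely absorb infinitely many reads. Combined with a diagonal interleaving that alternately advances through the writes and emits pending reads, this keeps all indices finite, after which I would confirm conditions~(1)--(4) by the case analysis set up above.
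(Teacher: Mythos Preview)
Your overall architecture—first totally order the writes, then insert each non-overlapping read immediately after $\rho(r)$, tie-breaking by invocation order—is exactly the paper's, and your reduction of condition~(2) to the single requirement ``$\rho(r)$ must $\sqsubset$-dominate every write fixed for $r$'' is correct and matches the paper's case analysis.

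The substantive difference, and the real gap, is in how the write order is produced. The paper does not argue acyclicity of a constraint graph; it \emph{directly defines} the order on $W=\bigcup_{r\in\reads{\sigma}}\relwrites{\sigma,r}$ as $\sersym{}=\bigcup_{r}\bigl(\sersym[r]\cap(W\times W)\bigr)$ and uses the write-order condition to show that this union is already a strict total order (transitivity is the only non-trivial part, and it holds because for any two reads the one with the later response has the larger relevant-write set). Writes outside $W$ are then appended in invocation order. Your route instead isolates only the edges $w\to\rho(r)$ for $w\in\fixedwrites{\sigma,r}\setminus\{\rho(r)\}$ together with $<_{\sigma}$, and proposes an induction on cycle length. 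The two-cycle case is fine, but the mechanism you sketch for longer cycles (``relevance exclusions chain into a $<_{\sigma}$-cycle'') is not an argument: an edge $u\to v$ witnessed by one read and an edge $v\to w$ witnessed by another do not compose into an edge of either of your two kinds, so there is no evident inductive step. The clean way to close the gap is to observe that every edge you generate, once restricted to $W$, is an instance of $u\ser[r]v$ for some $r$ and hence lies in $\sersym{}$—but that \emph{is} the paper's construction, so the detour through acyclicity and linear extensions is circuitous rather than genuinely alternative. A minor separate point: the diagonal interleaving is unnecessary, since your own constraints already force any write with infinitely many non-overlapping readers to be $\sqsubset$-maximal (cofinally many such readers have every write fixed for them, by \autoref{lem:readsfromfinite}), so the naive concatenation $w_0,\vec{r_0},w_1,\vec{r_1},\dots$ already has order type at most~$\omega$.
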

  \begin{proof}
      We define $\vec{o}$ in three steps: first, we define an enumeration of
      \begin{equation*}
          W = \bigcup_{r\in\reads{\sigma}}\relwrites{\sigma,r}\enskip;
      \end{equation*}
      then we extend this enumeration to an enumeration of $\writes{\sigma}$; and finally we suitably insert the elements of $\readsno{\sigma}$ in this enumeration. After defining $\vec{o}$ we shall establish that it satisfies the required properties. 
      
      Define the relation $\sersym{}$ on $W$ by
      \begin{equation*}
          \sersym{} = \bigcup_{r\in\reads{\sigma}}(\sersym[r]\cap(W\times W))\enskip.
      \end{equation*}
      
      To prove that $\sersym{}$ is irreflexive, it suffices to note that $\sersym[r]$ is irreflexive for all $r\in\reads{\sigma}$.
      
      To prove that $\sersym{}$ is transitive, let $w_1,w_2,w_3\in W$ and suppose that $w_1\ser{} w_2$ and $w_2\ser{} w_3$.
      From the definition of $\sersym$ and $w_1\ser w_2$ it follows that $w_2\neq w_\mathit{init}$; similarly, from $w_2\ser{}w_3$ it follows that $w_3\neq w_\mathit{init}$. Then there exist $r$ and $r'$ such that $w_1\ser[r]w_2$ and $w_2\ser[r']w_3$. If $r=r'$, then $w_1\ser{}w_3$ immediately follows by the transitivity of $\sersym[r]=\sersym[r']$. Otherwise, either the response of $r'$ occurs later in $\sigma$ than the response of $r$, or vice versa. In the first case, we have that $\relwrites{\sigma,r}\subseteq\relwrites{\sigma,r'}$ and therefore we get by the write-order regular register condition that $w_1\ser[r']w_2$; since $\sersym[r']$ is transitive, it follows that $w_1 \ser[r']w_3$ and hence $w_1 \ser{} w_3$. In the second case, we have that $\relwrites{\sigma,r'}\subseteq\relwrites{\sigma,r}$ and therefore we get by the write-order regular register condition that $w_2\ser[r]w_3$; since $\sersym[r]$ is transitive, it follows that $w_1\ser[r] w_3$ and hence $w_1\ser{} w_3$.

      To see that for all $w,w'\in W$ we either have that $w \ser{} w'$ or $w'\ser{} w$, note that there exists $r\in\reads{\sigma}$ such that $w$ and $w'$ are both relevant for $r$. Hence, either $w\ser[r]w'$ or $w'\ser[r]w$, so we have $w\ser w'$ or $w'\ser w$.
      
      Thus, we have now established that $\ser{}$ is a total order on $W$.

      Let $W'=\writes{\sigma}\setminus W$. If $W'\neq\emptyset$, then $W$ must be finite, for the writes in $W'$ are not relevant for any read in $\reads{\sigma}$ and hence their invocations all occur after the responses of all reads in $\sigma$. This means that there are finitely many reads in $\sigma$, and that $\sigma$ must have a finite prefix $\sigma'$ in which all responses of all reads in $\sigma$ occur. The writes in $W$ must have their invocations before the occurrence of the response of some read in $\sigma$, and so all occurrences of invocations of writes in $W$ must occur in the finite prefix $\sigma'$ of $\sigma$. This means that $W$ is finite.

      Now, let $\vec{w}=w_0,w_1,w_2,\dots$ be the enumeration of $\writes{\sigma}$ that starts with an enumeration of $W$ that is consistent with $\sersym{}$ (i.e., for all natural numbers $i,j$ we have that $i<j$ implies $w_i \ser{} w_j$), and that, if $W$ is finite,  is followed by an enumeration of $W'$ consistent with the order of the invocations of its elements in $\sigma$ (i.e., if $w_i,w_j\in W'$, then we have that $i<j$ implies that the invocation of $w_i$ occurs before the invocation of $w_j$ in $\sigma$).
      
      We proceed to extend $\vec{w}$ to an enumeration of all operations in $\readsno{\sigma}\cup\writes{\sigma}$ by inserting directly after each $w_i$ the elements of
      \begin{equation*}
          \rho^{-1}_\mathit{no}(w_i)=\{r\in\readsno{\sigma}\mid \rho(r)=w_i\}\enskip.
      \end{equation*}
      Let $\vec{r_i}=r_{i,0},r_{i,1},r_{i,2},\dots$ be the enumeration of $\rho^{-1}_{\mathit{no}}(w_i)$ that is consistent with the order of the invocations of the reads in $\sigma$ (i.e., for all relevant natural numbers $j,k$ we have that $j<k$ implies that the invocation of $r_{i,j}$ precedes the invocation of $r_{i,k}$ in $\sigma$). Note that, by Lemma~\ref{lem:readsfromfinite}, $\vec{r_i}$ can only be infinite if $w_i$ is the last element of $\vec{w}$. So we can now define an enumeration $\vec{o}$ of the operations in $\readsno{\sigma}\cup\writes{\sigma}$ as follows:
      \begin{equation*}
          \vec{o} = {o_0,o_1,o_2,\dots}
                  = {w_0,\vec{r_0},w_1,\vec{r_1},w_2,\vec{r_2},\dots} \enskip.
      \end{equation*}

      We proceed to argue that $\vec{o}$ satisfies the required properties.

      \begin{enumerate}
          \item It is immediate by our assumptions about $w_{\mathit{init}}$ that it is the least element with respect to $\sersym$. So it is the first element of the enumeration of $W$, and hence of $\vec{o}$.
          \item To prove that $o_i <_{\sigma} o_j$ implies $i<j$ for all relevant $i,j$, we assume that $o_i<_{\sigma} o_j$ and distinguish seven cases.
          \begin{itemize}
              \item If $o_i,o_j\in W$, then from $o_j\in W$ and $o_i<_{\sigma}o_j$ it follows that there exists $r\in\reads{\sigma}$ such that $o_i,o_j\in\relwrites{\sigma,r}$. Then $o_i\ser[r]o_j$, so $o_i \ser{} o_j$ and hence $o_i$ appears before $o_j$ in $\vec{o}$.
              \item If $o_i,o_j\in W'$, then from $o_i<_{\sigma}o_j$ it follows that the response of $o_i$, and hence also the invocation of $o_i$, appears before the invocation of $o_j$, and therefore $o_i$ appears before $o_j$ in $\vec{o}$.
              \item If $o_i\in W$ and $o_j\in W'$, then it is immediately clear from the definition of $\vec{o}$ that $i<j$.
              \item We show that $o_i\in W'$ and $o_j\in W$ is impossible. For suppose it is, then, since $o_j\in W$, there exists $r\in\reads{\sigma}$ such that $o_j$ is relevant for $r$, while, since $o_i\in W'$ we have that $r<_{\sigma} o_i$. From $o_i<_{\sigma}o_j$ it follows that $r <_{\sigma} o_j$, contradicting that $o_j$ is relevant for $r$.
              \item Consider $o_i,o_j\in\readsno{\sigma}$. From $o_i <_{\sigma} o_j$ it follows that the response, and hence the invocation, of $o_i$ appears before the invocation of $o_j$ in $\sigma$, so if $\rho(o_i)=\rho(o_j)$, then it immediately follows that $i<j$. So we proceed with the assumption that $\rho(o_i)\neq\rho(o_j)$. Since $o_i$ and $o_j$ are non-overlapping reads, we have that $\rho(o_i)<_{\sigma} o_i$ and $\rho(o_j)<_{\sigma} o_j$, and from $o_i<_{\sigma}o_j$ it, moreover, follows that $\rho(o_i)<_{\sigma} o_j$. Since both $\rho(o_i)$ and $\rho(o_j)$ are relevant for $o_j$, we have that $\rho(o_i)$ and $\rho(o_j)$ are ordered by the $o_j$-serialisation $\ser[o_j]$. Moreover, we must have $\rho(o_i)\ser[r_j]o_j$ and $\rho(o_j)\ser[r_j] o_j$ and since $\rho(o_j)$ must be the direct $\ser[r_j]$-predecessor of $o_j$ it follows that $\rho(o_i)\ser[r_j]\rho(o_j)$. Hence, $\rho(o_i)$ appears before $\rho(o_j)$ in $\vec{o}$ and therefore $i<j$.
              \item If $o_i\in \writes{\sigma}$ and $o_j\in \readsno{\sigma}$, then both $o_i$ and $\rho(o_j)$ are relevant for $o_j$. Since $\sersym[o_j]$ must be consistent with $<_{\sigma}$ we have that $o_i\ser[o_j] o_j$. Furthermore, since $\rho(o_j)$ is defined as $o_j$'s direct predecessor according to $\ser[r_j]$, it follows that either $o_i\ser[o_j]\rho(o_j)$ or $o_i=\rho(o_j)$. In both cases, we find that $i<j$.
              \item Suppose that $o_i\in\readsno{\sigma}$ and $o_j\in\writes{\sigma}$. If $o_j\in W'$, then $\rho(o_i)<_{\sigma} o_i <_{\sigma} o_j$, so according to the definition of $\vec{o}$, $o_i$ will appear in $\vec{o}$ between $\rho(o_i)$ and $o_j$, from which it follows that $i<j$.
              If $o_j\in W$, then there exists a read $r$ such that $o_j\in\relwrites{\sigma,r}$. Since $\rho(o_i)<_{\sigma}o_i<_{\sigma}o_j$, it follows that $\rho(o_i)$ is also relevant for $r$. We must have $\rho(o_i)\ser[r]o_j$, and, according to the definition of $\vec{o}$, $o_i$ will appear between $\rho(o_i)$ and $o_j$, so $i<j$.
          \end{itemize}
          \item Let $r\in \readsno{\sigma}$. Then $r\in \rho^{-1}_{\mathit{no}}(\rho(r))$, so $r$ is an element of the sequence of writes directly following $\rho(r)$. It follows that $\rho(r)$ appears before $r$ in $\vec{o}$ and between $\rho(r)$ and $r$ there is no other write.
          \item Let $r,r'\in\readsno{\sigma}$ be distinct, suppose that $\rho(r)=\rho(r')$ and $r$ appears before $r'$ in $\vec{o}$. Then we have $r,r'\in\rho^{-1}_{\mathit{no}}(\rho(r))$, so $r$ and $r'$ both appear in the sequence of reads that directly follows $\rho(r)$ in $\vec{o}$. The reads in that sequence are ordered in accordance with the order of their invocations. It follows that the invocation of $r$ occurs before the invocation of $r'$ in $\sigma$.
          \qedhere
      \end{enumerate}
  \end{proof}

  The enumeration delivered by \autoref{lem:ordering} allows us to define a procedure that transforms a schedule $\sigma$ into a trace $\alpha\in\RTraces$ as follows. Simultaneously iterate through the enumeration and the schedule. If the current operation in the enumeration is a read, then we move forward in the schedule until after the invocation of $r$ and move to the next operation in the enumeration. If the current operation in the enumeration is a write, then there are two cases: if the invocation of the write has already occurred, then we insert the associated order action in the schedule, move past the order action, and move to the next operation; if the invocation of the write has not yet occurred, then we first move in the schedule until right after the invocation of the write, insert the associated order action, move past the order action, and move to the next operation. This establishes \autoref{prop:weakorderreg}.

\section{Mutual Exclusion Counterexamples}\label{app:mutex}

In this appendix we give in-depth discussions and example traces for our more interesting results presented in \autoref{sec:verifmutex}.
For all the models used as well as the exact pseudocode we modelled to arrive at the conclusions in \autoref{tab:verifres}, we point to the examples included with the mCRL2 distribution. These can be found at the same link as the register model itself.

In the counterexample discussions, we at times refer to the ``entry protocol'' and the ``exit protocol'' of an algorithm.
The former is the part of the algorithm before the critical section is entered, the latter is the part after the critical section is released.

\subsection{Properties}
In mCRL2 properties must be encoded in the modal $\mu$-calculus.
We encoded the mutual exclusion property as
\begin{gather*}
\forall_{i, j \in \mathbb{T}}. (i \neq j) \Rightarrow (\\
\tab [\mathit{true}^\star]\neg(\langle\mathit{crit}(i)\rangle \mathit{true} \land \langle\mathit{crit}(j)\rangle \mathit{true}))
\end{gather*}
and the reachability of the critical section property as
\begin{gather*}
\forall_{i \in \mathbb{T}}. \neg (\langle\mathit{true}^\star \cdot \mathit{noncrit}(i)\rangle \nu X.\langle\overline{\mathit{crit}(i)}\rangle X)
\end{gather*}

\subsection{Peterson}\label{app:peterson-mutex}
In \autoref{subsec:peterson-mutex} we demonstrate a counterexample for Peterson's algorithm using safe registers.
Here, we note that the same counterexample is also valid for our regular register model:
$r_2$ can still return a 0 because it overlaps with $w_4$; and by placing the $\mathit{order\_write}$ of $w_4$ before the $\mathit{order\_write}$ of $w_3$ we can also have thread 0 read a 1 at $r_4$.
However, this counterexample is not valid for the write-order definition from \cite{Shao11}. Since both writes to $\mathit{turn}$ are relevant for both the reads from $\mathit{turn}$, the two threads must agree on their respective order. For thread 0 to read a 1 in $\mathit{turn}$, it must be the case that $w_4 < w_3$. And since $w_3$ ends before $r_2$ starts, $r_2$ cannot possibly read a 0.
This is effectively the same situation as in \autoref{fig:weakordercounter}.
We cannot conclude from this that Peterson's algorithm is correct under write-order, only that this specific counterexample is not valid under those assumptions.

\subsection{Attiya-Welch}\label{app:attiya-welch-mutex}
The algorithm we refer to as the Attiya-Welch algorithm is presented in both \cite{AttiyaWelch04} and \cite{Shao11} as being Peterson's algorithm from \cite{Peterson81}.
While the algorithm indeed has some similarities to Peterson's it is not identical and in particular behaves different when using non-atomic registers. Hence why we refer to it as the Attiya-Welch algorithm, rather than a version of Peterson's.

As shown in the table, the Attiya-Welch algorithm does ensure mutual exclusion when non-atomic registers are used.
Of interest is that while the original presentation of the algorithm from \cite{AttiyaWelch04} also ensures reachability of the critical section when using non-atomic registers, the version of the algorithm presented in \cite{Shao11} does not. 
The two algorithms are shown in \autoref{alg:attiya-welch} and \autoref{alg:attiya-welch-var} respectively.

\begin{algorithm}
 \caption{Attiya-Welch algorithm for two threads as presented in \cite{AttiyaWelch04}. Once again $i$ is a thread's own id, $j$ the id of the other.}\label{alg:attiya-welch}
\begin{algorithmic}[1]
        \State{$\mathit{flag}[i] \gets 0$}\label{attiya-welch-entry2}
        \State{\textbf{await} $\mathit{flag}[j] = 0 \lor \mathit{turn} = j$}
        \State{$\mathit{flag}[i] \gets 1$}
        \If{$\mathit{turn} = i$}\label{attiya-welch-if}
            \If{$\mathit{flag}[j] = 1$}
                \State{\textbf{goto} line \ref{attiya-welch-entry2}}\label{attiya-welch-goto}
            \EndIf
        \Else
            \State{\textbf{await} $\mathit{flag}[j] = 0$}\label{attiya-welch-wait}
        \EndIf
        \State{\textbf{critical section}}
        \State{$\mathit{turn} \gets i$}
        \State{$\mathit{flag}[i] \gets 0$}
\end{algorithmic}
\end{algorithm}

\begin{algorithm}
\caption{Attiya-Welch algorithm for two threads as presented in \cite{Shao11} Once again $i$ is a thread's own id, $j$ the id of the other.}\label{alg:attiya-welch-var}
\begin{algorithmic}[1]
    \Repeat
    \State{$\mathit{flag}[i]\gets 0$}
    \State{\textbf{await} $\mathit{flag}[j] = 0 \vee turn = j$}
    \State{$\mathit{flag}[i]\gets 1$}
    \Until{$\mathit{turn} = j \vee \mathit{flag}[j] = 0$}\label{attiya-welch-var-until}
    \If{$\mathit{turn} = j$}\label{attiya-welch-var-check}
    \State{\textbf{await} $\mathit{flag}[j] = 0$}
    \EndIf
    \State{\textbf{critical section}}\label{attiya-welch-var-crit}
    \State{$\mathit{turn}\gets i$}
    \State{$\mathit{flag}[i]\gets 0$}
\end{algorithmic}
\end{algorithm}

When using regular registers and two threads, we get the following counterexample for reachability of the critical section on the variant presentation:
\begin{enumerate}
    \item Thread 1 gains uncontested access to the critical section. Because thread 0 is not competing, thread 1 can reach line \ref{attiya-welch-var-crit} without issue.
    \item Thread 1 starts its exit protocol with starting $\mathit{turn} \gets 1$. This operation does not have its order-action yet. At this point, a read of $\mathit{turn}$ can read both a 0 and a 1: 0 is the initial value and no $\mathit{order\_write}$ actions have occurred yet; 1 can be read because of overlap.
    \item Thread 0 starts the competition. Whenever it reads $\mathit{flag}[1]$ it sees a 1, but it can read whatever value it needs from $\mathit{turn}$ to get through the entry protocol. It escapes the first await-loop by reading $\mathit{turn} = 1$; it escapes the repeat-until loop by again reading $\mathit{turn} = 1$; it avoids the second await-loop entirely by reading $\mathit{turn} = 0$ on line \ref{attiya-welch-var-check}.
    \item The order-action for thread 1's write takes place, but the write is not finished yet.
    \item Thread 0 enters the critical section.
    \item In the exit protocol, thread 0 writes $\mathit{turn} \gets 0$. The order- and finish-actions take place immediately. At this point, the most recent order-action on $\mathit{turn}$ has the value 0, and the write by thread 1 is still active. So once again, reads of $\mathit{turn}$ can read either a 0 or a 1.
    \item Thread 0 finishes the exit protocol with $\mathit{flag}[0] \gets 0$. It re-enters the competition. Just like before, even though $\mathit{flag}[1] = 1$, thread 0 can get through most of the entry protocol by reading the required values for $\mathit{turn}$. It gets until right after line \ref{attiya-welch-var-until}, having just escaped the repeat-until loop. Unlike the previous execution of the entry protocol by thread 0, it reads $\mathit{turn} = 1$ on line \ref{attiya-welch-var-check}, as a result it starts awaiting $\mathit{flag}[1]$ to be 0.
    \item Thread 1 stops writing to $\mathit{turn}$. At this point, any read of $\mathit{turn}$ will give the value of the most recent order-action, which was 0.
    \item Thread 1 end the exit protocol by setting $\mathit{flag}[1]$ to 0. Thread 0 does not read $\mathit{flag}[1]$ yet.
    \item Thread 1 re-enters the competition. Since $\mathit{turn}$ is now 0, it can get through most of the entry protocol. It cannot get through the entry protocol entirely however because on line \ref{attiya-welch-var-check}, it once reads $\mathit{turn} = 0$. It then has to start awaiting $\mathit{flag}[0]$ to be equal to 0.
    \item At this point, both threads have their respective $\mathit{flag}$s set to 1, and both are waiting for the other's $\mathit{flag}$ to be 0. This is a deadlock; neither thread will ever be able to reach the critical section again.
\end{enumerate}

This same counterexample (ignoring the ordering of writes) also works for the safe registers.
This counterexample does not work for the write-order definition from \cite{Shao11} (it is once again similar to \autoref{fig:weakordercounter}), but it does hold for the weak definition.
This contradicts the claim in \cite{Shao11} that this algorithm ensures starvation freedom under weak -- if one thread can never reach the critical section again after having done a $\mathit{noncrit}$ action, then starvation freedom cannot hold.

As stated earlier, this counterexample is only present in the presentation from \cite{Shao11}, it is not present in the pseudocode given in \cite{AttiyaWelch04}.
At first glance, the algorithms seem to be equivalent: 
the goto-statement on line \ref{attiya-welch-goto}  of \autoref{alg:attiya-welch} is removed, and instead that part of the code is turned into a logically equivalent repeat-until loop.
There is only a minor implementation difference: where
in the original presentation $\mathit{turn}$ is read only once to determine whether the loop should be taken and whether the thread needs to wait for the other to lower its flag; 
in the variant presentation, $\mathit{turn}$ is read twice.

As shown in the counterexample, the deadlock requires an overlapping write on $\mathit{turn}$ which can only occur when both threads are in the exit protocol simultaneously. 
Since mutual exclusion is guaranteed, it must be the case that one thread writing to $\mathit{turn}$ is what allows the other thread to reach the exit protocol.
In \autoref{alg:attiya-welch} this is not possible: on line \ref{attiya-welch-if}, if $\mathit{turn} = i$ is read then, since $\mathit{flag}[j] = 1$, the other thread will be forced back to line \ref{attiya-welch-entry2}.
If on the other hand $\mathit{turn} = j$ is read, then the thread gets stuck in the waiting loop on line \ref{attiya-welch-wait}.
In \autoref{alg:attiya-welch-var} it is possible: by reading $\mathit{turn} = j$ on line \ref{attiya-welch-var-until} and $\mathit{turn} = i$ on line \ref{attiya-welch-var-check}. 
This is mentioned in part 3 in the counterexample.

\subsection{Lamport (3-bit)}\label{app:lamport-mutex}
Lamport highlighted the (theoretical) importance of mutual exclusion
algorithms that are resistant to safe registers \cite{Lamport86Mutex1}.
He also proposed the first algorithm specifically designed
to ensure mutual exclusion under safe registers:
the Bakery algorithm \cite{Lamport74}.
But the Bakery algorithm requires unbounded memory registers
and is therefore in many ways impractical and at the very least hard
to analyse.
In \cite{Lamport86Mutex2}, Lamport proposes
4 different solutions for this problem, each offering a different
trade-off between using a small number of communication variables
and satisfying stronger properties. We focus on the second algorithm,
which requires only three SWMR bits per thread and still ensures both
mutual exclusion and starvation freedom.
The algorithm can be seen in \autoref{alg:lamport3bit}.
We first introduce the variables used in the algorithm and the required definitions for understanding it.

This algorithm is for an arbitrary number of threads. We use id's $0$ to $N -1$ when there are $N$ threads.
The $j, y$ and $f$ variables are private variables in the range $0$ to $N - 1$.
The $x_i, y_i$ and $z_i$ registers are all Boolean variables initially set to 0/false.

Lamport's Three Bit Algorithm makes extensive use of cycles. A cycle,
as defined in \cite{Lamport86Mutex2}, is an object of the form
$\langle i_0, .., i_m \rangle$ of distinct elements. Two cycles
are the same if they contain the same elements in the same order
except for a cyclic permutation. The first element of a cycle is
its smallest element, so we take as the representative of a cycle
a list where the smallest element is at index 0. An ordered cycle
has all elements in order from smallest to largest, possibly only
after cyclic permutation. Since our representation of a cycle
is a list with the smallest element at the first position, 
an ordered cycle can be represented with a sorted list.

The operation $\mathrm{ORD}~S$ takes a set $S$ and returns the
ordered cycle containing exactly the elements from $S$.

In the algorithm, the Boolean function $CG(v, \gamma, i_j)$ is used.
Here, $v$ is a Boolean function mapping each element in the cycle
$\gamma$ to either true or false, and $i_j$ is an element from
$\gamma$.
\begin{align*}
   CG(v, \gamma, i_j) \defeq&~ v(i_j) \equiv CGV(v, \gamma, i_j) \\
   CGV(v, \gamma, i_j) \defeq&\begin{cases}
        \neg v(i_{j -1}) &\text{if $j > 0$}\\
        v(i_m) &\text{if $j = 0$}
        \end{cases}
\end{align*}

The phrase ``$i\gets j$\ \textbf{cyclically to} $k$'' means that
the iteration starts with $i = j$, then $j$ gets incremented by 1,
modulo the length of the cycle. This continues until $j = k$, at
which point the iteration stops without executing the loop with $j = k$.
$\oplus$ is used for addition modulo the length of the cycle.

\begin{algorithm}
\caption{Lamport's Three Bit algorithm}
\label{alg:lamport3bit}
\begin{algorithmic}[1]
  \State{$y_i \gets 1$} 
  \State{$x_i \gets 1$} \label{l1}
  \State{$\gamma \gets \mathrm{ORD}\{i\mid y_i = 1\}$} \label{l2}
  \State{$f\gets \mathrm{minimum}\{j\in\gamma \mid
    \mathit{CG}(z,\gamma,j)=1\}$}\label{lamport-3bit-functioncall}
  \For{$j\gets f$\ \textbf{cyclically to} $i$}
  \If{$y_j=1$}
    \If{$x_i=1$} {$x_i\gets 0$}\EndIf
  \State{\textbf{goto} line~\ref{l2}}
  \EndIf
  \EndFor
  \If{$x_i = 0$} {\textbf{goto} line~\ref{l1}} \EndIf
  \For{$j\gets i\oplus 1$ \textbf{cyclically to} $f$}
  \If{$x_j=1$} {\textbf{goto} line \ref{l2}}\EndIf
  \EndFor
  \State{\textbf{critical section}}
  \State{$z_i\gets 1-z_i$}
  \State{$x_i\gets 0$}
  \State{$y_i\gets 0$}
 \end{algorithmic}
\end{algorithm}

We find that the algorithm indeed satisfies mutual exclusion and reachability of the critical section when using non-atomic registers.
However, in the process of modelling it became clear these results are only valid when the computation on line \ref{lamport-3bit-functioncall} is implemented in a very particular manner, something which is not emphasised in the algorithm's presentation.
In our first model, we handled this line as follows: we went through the constructed ordered list $\gamma$ from the smallest to the largest element. For each, we read the associated $z$-value as well as the $z$-value of the element it has to be compared against according to the $\mathit{CGV}$ function. 
As soon as we found one which satisfies the equality, that value was chosen for $f$.
This seems to match the description of the algorithm from \cite{Lamport86Mutex2}, ``evaluating [$z$] at $j$ requires a read of the variable $z_j$.''
Yet, this implementation leads to a violation of the reachability of the critical section, even when atomic registers are used.

The violation is caused when one thread is updating its $z$-value in the exit protocol, while a different thread has to read this $z$-value multiple times for the computation on line \ref{lamport-3bit-functioncall}.
Consider the following situation with two threads and atomic registers:
\begin{enumerate}
    \item Thread 1 executes the entry protocol successfully because it is the only competing thread. It gets to its exit protocol and sets $z_1$ to $1$. It then completes the rest of the exit protocol.
    \item Thread 1 starts the competition again. Once again, it is the only competing thread and hence can reach the critical section and start the exit protocol. It has not yet updated $z_1$ to be $0$ again.
    \item Thread 0 starts the competition. At line \ref{l2} it finds that $y_0 = y_1 = 1$ so $\gamma = [0, 1]$.
    \item On line \ref{lamport-3bit-functioncall}, thread 0 tests if $\mathit{CG}(z,\gamma,0) = 1$. For this, it needs to compare $z_0$ and $z_1$. It finds $z_0 = 0$ and $z_1 = 1$. We find $\mathit{CG}(z, \gamma, 0) = 0$, so $0$ is not deemed a valid value for $f$.
    \item Thread 1 now updates $z_1$ to be $0$.
    \item Next, thread 0 checks $\mathit{CG}(z, \gamma, 1)$. It once again reads $z_0$ and $z_1$, but now it wants those to have different values. It reads $z_0 = 0$ and $z_1 = 0$. Hence, $\mathit{CG}(z, \gamma, 1) = 0$ so 1 is not deemed a valid value for $f$.
\end{enumerate}
The algorithm does not account for there being no valid value for $f$, because this situation is not meant to occur. 
As a consequence, in our model thread 0 simply cannot take actions anymore once this situation is reached and will therefore never reach the critical section again.
In \cite{Lamport86Mutex2} a lemma is proven which states that there is always some $i$ for which $\mathit{CG}(v, \gamma, i) = 1$. 
But that proof does not account for a value changing while the comparisons are being done.

This situation can be avoided by reading all the $z$ values once before starting line \ref{lamport-3bit-functioncall}, and storing the result locally. 
Indeed, once we modelled the algorithm using this approach we found the results shown in \autoref{tab:verifres}.
This once again highlights the importance of minor implementation details when dealing with non-atomic registers.

\subsection{Szymanski (flag)}\label{app:szymanski-flag-mutex}
As stated in \autoref{subsec:szymanski-flag-mutex}, Szymanski does not claim the flag algorithm is valid when using non-atomic registers.
He instead claims that when the $\mathit{flag}$ register is implemented as three bits as shown in \autoref{tab:szy-flag2bits}, the algorithm is resistant to regular registers.
Pseudocode is given in \cite{Szy88} that incorporates this change, as well as other changes to make the algorithm resistant to thread failure and restarts.
However, some formatting issues in the pseudocode presentation means we are not certain exactly what was intended there.
Instead of modelling that pseudocode, we therefore made the translation from the flag-algorithm to a three bit implementation ourselves and modelled that.
The result is shown in \autoref{alg:szy-flag-bits}.

\begin{table}
\centering
\begin{tabular}{|c|c|c|c|}
\hline
\textit{flag} & \textit{intent} & \textit{door\_in} & \textit{door\_out} \\ \hline
0             & 0               & 0                 & 0                  \\ \hline
1             & 1               & 0                 & 0                  \\ \hline
2             & 0               & 1                 & 0                  \\ \hline
3             & 1               & 1                 & 0                  \\ \hline
4             & 1               & 1                 & 1                  \\ \hline
\end{tabular}
\caption{Translating the integer register $\mathit{flag}$ to three Boolean registers $\mathit{intent}$, $\mathit{door\_in}$ and $\mathit{door\_out}$.}
\label{tab:szy-flag2bits}
\end{table}

\begin{algorithm}
\caption{Szymanski's flag algorithm implemented with bits}\label{alg:szy-flag-bits}
\begin{algorithmic}[1]
  \State{$\mathit{intent}[i]\gets 1$}\label{szy-bits-1}
  \State{\textbf{await} $\forall j.\ \mathit{intent}[j] = 0 \lor \mathit{door\_in}[j] = 0$}\label{szy-bits-2}
  \State{$\mathit{door\_in}[i]\gets 1$}\label{szy-bits-3}
  \If{$\exists j.\ \mathit{intent}[j] = 1 \land \mathit{door\_in}[j] = 0$}\label{szy-bits-4}
    \State{$\mathit{intent}[i]\gets 0$}\label{szy-bits-5}
    \State{\textbf{await} $\exists j.\ \mathit{door\_out}[j]=1$}\label{szy-bits-6}
  \EndIf
  \If{$\mathit{intent}[i] = 0$}\label{szy-bits-7}
    \State{$\mathit{intent}[i] \gets 1$}\label{szy-bits-8}
  \EndIf
  \State{$\mathit{door\_out}[i]\gets 1$}\label{szy-bits-9}
  \State{\textbf{await} $\forall j < i.\ \mathit{door\_in}[j] = 0$}\label{szy-bits-10}
  \State{\textbf{critical section}}\label{szy-bits-11}
  \State{\textbf{await} $\forall j>i.\ \mathit{door\_in}[j]=0 \vee
    \mathit{door\_out}[j]=1$}\label{szy-bits-12}
  \State{$\mathit{intent}[i]\gets 0$}\label{szy-bits-13}
  \State{$\mathit{door\_in}[i]\gets 0$}\label{szy-bits-14}
  \State{$\mathit{door\_out}[i]\gets 0$}\label{szy-bits-15}
\end{algorithmic}
\label{alg:szymanski-flag-bits}
\end{algorithm}

Interestingly enough, we find that mutual exclusion no longer holds even with atomic registers when this change is made.
With two threads, mCRL2 reports that mutual exclusion holds with atomic registers (although not with safe or regular). 
For three threads, the following counterexample is found:
\begin{enumerate}
    \item Thread 2 runs through the entire algorithm until line \ref{szy-bits-14}. At this point, $\mathit{intent}[2] = 0, \mathit{door\_in}[2] = 1$ and $\mathit{door\_out}[2] = 1$.
    \item Threads 0 and 1 can both get past line \ref{szy-bits-2}, since $\mathit{door\_in}[0] = \mathit{door\_in}[1] = 0$ and $\mathit{intent}[2] = 0$.
    \item Thread 1 continues further, at line \ref{szy-bits-4} it sees $\mathit{intent}[0] = 1$ and $\mathit{door\_in}[0] = 0$ so it has to execute lines \ref{szy-bits-5} and \ref{szy-bits-6}. It can immediately escape line \ref{szy-bits-6} however, because $\mathit{door\_out}[2] = 1$. 
    \item Thread 1 continues through lines \ref{szy-bits-7}, \ref{szy-bits-8} and \ref{szy-bits-9}. On line \ref{szy-bits-10}, it sees $\mathit{door\_in}[0] = 0$ so it can enter the critical section.
    \item Thread 0 continues on line \ref{szy-bits-3}. There is no thread with $\mathit{intent}$ set to true and $\mathit{door\_in}$ to false, so it directly gets to line \ref{szy-bits-9} and \ref{szy-bits-10}. Since it has the lowest thread id, it can immediately enter the critical section.
\end{enumerate}
This counterexample relies on the resetting of the variables in the exit protocol happening separately, rather than all at once.
The properties can be made true if the order of the resets in the exit protocol is changed.
If the order is $\mathit{door\_out}, \mathit{door\_in}, \mathit{intent}$ then both properties hold with three threads and atomic registers.
This is not a desirable solution however. While we did not analyse starvation freedom formally, reconsidering the above counterexample makes it easy to observe that the following scenario is possible if $\mathit{door\_out}$ and $\mathit{door\_in}$ are reset before $\mathit{intent}$ is:
\begin{enumerate}
    \item Thread 2 runs through the entire algorithm until it gets to the exit protocol. Here it resets $\mathit{door\_out}$ and $\mathit{door\_in}$ but not $\mathit{intent}$ yet.
    \item Threads 0 and 1 both get past line \ref{szy-bits-2} because $\mathit{door\_in}[0] = \mathit{door\_in}[1] = \mathit{door\_in}[2] = 0$. On line \ref{szy-bits-4}, both see $\mathit{intent}[2] = 1$ and $\mathit{door\_in}[2] = 0$, meaning they both go to lines \ref{szy-bits-5} and \ref{szy-bits-6}.
    \item If thread 2 never chooses to re-attempt access of the critical section, then $\mathit{door\_out}[2]$ will never become true again. And so threads 0 and 1 will never escape line \ref{szy-bits-6}. 
\end{enumerate}
So while reachability of the critical section holds, it relies on thread 2 always wanting to re-enter the competition.
We foresee no such issues if the order of resets is $\mathit{door\_out}$, $\mathit{intent}$, $\mathit{door\_in}$.
Although further formal verification is needed to confirm this reset order has no complications.

Note that regardless of the exit protocol's exact implementation, the algorithm still does not ensure mutual exclusion with safe or regular registers.
For example, with two threads and safe registers, mCRL2 generates the following counterexample for mutual exclusion:
\begin{enumerate}
    \item Both threads 0 and 1 get through lines \ref{szy-bits-1} and \ref{szy-bits-2}.
    \item Thread 0 starts writing 1 to $\mathit{door\_in}[0]$, but does not finish this write yet.
    \item Thread 1 continues through line \ref{szy-bits-3}. On line \ref{szy-bits-4}, it reads $\mathit{door\_in}[0] = 1$ with overlap. It also sees $\mathit{door\_in}[1] = 1$ so can continue to line \ref{szy-bits-7}.
    \item Thread 1 can simply continue through lines \ref{szy-bits-7} and \ref{szy-bits-9}. On line \ref{szy-bits-10}, it reads $\mathit{door\_in}[0] = 0$ with overlap, and can therefore enter the critical section.
    \item Thread 0 finishes the write on line \ref{szy-bits-3}. On line \ref{szy-bits-4} it sees $\mathit{door\_in}[0] = \mathit{door\_in}[1] = 1$ so it continues on line \ref{szy-bits-7}. It does \ref{szy-bits-7}, \ref{szy-bits-9} and then on line \ref{szy-bits-10} it does not need to check any other thread's $\mathit{door\_in}$ values because it has the lowest thread id. Thread 0 can enter the critical section.
\end{enumerate}
This counterexample is also valid for SWMR regular registers, since those allow two reads overlapping the same write to first read the new and then the old value.

\subsection{Szymanski (3-bit linear wait)}\label{app:szymanski-3bit-mutex}
An alternative version of Szymanski's algorithm is the 3-bit linear wait algorithm from \cite{Szy90}.
The pseudocode is presented in \autoref{alg:szymanski-3bit}.

  \begin{algorithm}
    \caption{Szymanski's 3-bit linear wait algorithm}
    \label{alg:szymanski-3bit}
    \begin{algorithmic}[1]
      \State{$a_i\gets 1$}\label{szy-3bit-1}
      \lFor{$j\gets 0$\ \textbf{to}\ $N{-}1$} {\textbf{await} $s_j = 0$}\label{szy-3bit-2}
      \State{$w_i\gets 1$}\label{szy-3bit-3}
      \State{$a_i\gets 0$}\label{szy-3bit-4}
      \While{$s_i=0$}\label{szy-3bit-5}
\State{$j \gets 0$}\label{szy-3bit-6}
        \lWhile{$j < N\wedge a_j = 0$}{$j\gets j+1$}\label{szy-3bit-7}
          \If{$j=N$}\label{szy-3bit-8}
            \State{$s_i \gets 1$}\label{szy-3bit-9}
            \State{$j\gets 0$}         \label{szy-3bit-10}
            \lWhile{$j<N\wedge a_j=0$}{$j\gets j+1$}\label{szy-3bit-11}
            \If{$j< N$} {$s_i\gets 0$}\label{szy-3bit-12}
           \Else\label{szy-3bit-13}
             \State{$w_i\gets 0$}\label{szy-3bit-14}
             \lFor{$j\gets 0$\ \textbf{to} $N-1$}{\textbf{await} $w_j=0$}\label{szy-3bit-15}
           \EndIf
         \EndIf
         \If{$j<N$}\label{szy-3bit-16}
           \State{$j\gets 0$}\label{szy-3bit-17}
           \lWhile{$j<N \wedge (w_j=1\vee s_j=0)$}{$j\gets j+1$}\label{szy-3bit-18}
         \EndIf
         \If{$j \neq i \wedge j<N$}\label{szy-3bit-19}
         \State{$s_i\gets 1$}\label{szy-3bit-20}
         \State{$w_i\gets 0$}\label{szy-3bit-21}
         \EndIf
     \EndWhile
      \lFor{$j\gets 0$ \textbf{to} $i-1$}{\textbf{await} $s_j = 0$}\label{szy-3bit-22}
      \State{\textbf{critical section}}\label{szy-3bit-23}
      \State{$s_i\gets 0$}\label{szy-3bit-24}
     \end{algorithmic}
  \end{algorithm}

Surprisingly, our verification shows this algorithm does not ensure mutual exclusion even when using atomic registers. 
The counterexample once again requires at least three threads. 
The first counterexample generated by mCRL2 relies on reading $w_j$ and $s_j$ separately on line \ref{szy-3bit-18}. 
This was likely unintended behaviour, but as far as we can tell it is not excluded in the algorithm's description in \cite{Szy90}; not to mention that enforcing this would require a semaphore which returns us to the issue of assuming a lower-level solution to the mutual exclusion problem.

For completeness, we still model a variant where a semaphore is used to protect every write to a $w$- or $s$-register, as well as the two reads on line \ref{szy-3bit-18}, but nothing else.
This leads to the following counterexample with three threads and atomic registers:
\begin{enumerate}
    \item Threads 0, 1 and 2 all execute lines \ref{szy-3bit-1} and \ref{szy-3bit-2}. We now have that all $a$'s are 1, all $w$'s 0 and all $s$'s 0.
    \item Thread 1 continues the competition. It executes lines \ref{szy-3bit-3} and \ref{szy-3bit-4}, on line \ref{szy-3bit-5} it sees $s_1 = 0$ so it goes into the while-loop. On line \ref{szy-3bit-7} it sees $a_0 =1 $ so it breaks out with $j = 0$ which means the condition on line \ref{szy-3bit-8} evaluates to false and the condition on line \ref{szy-3bit-16} evaluates to true. 
    \item Thread 0 does the same. On line \ref{szy-3bit-7}, it also breaks early because $a_2 = 1$ hence it also ends up in the body of the if-statement on line \ref{szy-3bit-16}.
    \item Thread 0 reads $w_0 = 1, s_0 = 0$ on line \ref{szy-3bit-18}; it continues with $j = 1$.
    \item Thread 1 reads $w_0 = 1, s_0 = 0, w_1 = 1, s_1 = 0$ on line \ref{szy-3bit-18}, so it continues with $j = 2$.
    \item Thread 2 now continues. It executes lines \ref{szy-3bit-3} to \ref{szy-3bit-6}. On line \ref{szy-3bit-7}, it finds that all $a$-values are $0$, so it continues on line \ref{szy-3bit-9}. Here it sets $s_2$ to 1. On line \ref{szy-3bit-11} it once again sees that all $a$-values are $0$, so it continues on line \ref{szy-3bit-14} where it sets $w_2$ to $0$.
    \item Thread 1 reads $w_2 = 0, s_2 = 1$, hence it breaks out of the loop on line \ref{szy-3bit-18} with $j = 2$. The condition on line \ref{szy-3bit-19} evaluates to true so thread 1 executes both $s_1 \gets 1$ and $w_1 \gets 0$.
    \item Thread 1 goes back to line \ref{szy-3bit-5}. Here it sees $s_1 = 1$, so it goes to line \ref{szy-3bit-22}. It reads $s_0 = 0$ and can therefore enter the critical section.
    \item Thread 0 now reads $w_1 = 0, s_1 = 1$, so it breaks out of the loop on line \ref{szy-3bit-18} with $j = 1$. The condition on line \ref{szy-3bit-19} evaluates to true so it executes $s_0 \gets 1$ and $w_0 \gets 0$. 
    \item Thread 0 goes back to line \ref{szy-3bit-5} where it sees $s_0 = 1$. It then goes to line \ref{szy-3bit-22} where it does not need to check the $s$-value of any thread since it has the lowest id. Thread 0 can enter the critical section.
\end{enumerate}

The issue is that the third thread allows the other two to satisfy exists-conditions when this should not be happening.
This same counterexample is of course valid with regular and safe registers with three threads.
We did find that with only two threads, the semaphore does ensure the two properties even when safe registers are used.
\end{document}